\newcommand{\Du}{\nabla_{\!\boldsymbol{u}}} 
\newcommand{\Dtu}{\nabla_{\!\boldsymbol{\tilde{u}}}} 
\newcommand{\Db}{\nabla_{\!\boldsymbol{b}}}
\newcommand{\idop}{\mathbb{I}}
\newcommand{\dd}{\mathrm{d}}
\newtheorem{thm}{Theorem}[section]
\newtheorem{prop}[thm]{Proposition}
\theoremstyle{definition}
\newtheorem{defi}[thm]{Definition}
\address{%
$^{1}$ \quad Scuola Superiore Meridionale, Largo S. Marcellino 10, I-80138 Napoli, Italy; d.blixt@ssmeridionale.it\\
$^{2}$ \quad Escuela T\'ecnica Superior de Ingenier\'ia de Montes, Forestal y del Medio Natural, Universidad Politécnica de Madrid, 28040 Madrid, Spain; alejandro.jimenez.cano@upm.es\\
$^{3}$ \quad Laboratory of Theoretical Physics, Institute of Physics, University of Tartu,W. Ostwaldi 1, Tartu 50411, Estonia\\
$^{4}$ \quad Department of Theoretical Physics \& IPARCOS, Complutense University of Madrid, E-28040, 
Madrid, Spain; awojnar@ucm.es\\
$^{5}$ \quad Institute of Theoretical physics, University of Wroc\l aw, pl. Maxa Borna 9, 50-206 Wroc\l aw, Poland
}
\abstract{
In this work, we revise the concept of foliation and related aspects that are crucial when formulating the Hamiltonian evolution for various theories beyond General Relativity. In particular, we show the relation between the kinematic characteristics of timelike congruences (observers) and the existence of foliations orthogonal to them. We then explore how local Lorentz transformations acting on observers affect the existence of transversal foliations, provide examples, and discuss the implications of these results for the $3+1$ formulation of tetrad modified theories of gravity. }
\begin{document}


\section{Introduction}

The problem of evolving in time the metric in General Relativity can be regarded as a Cauchy problem, where the evolution of the system is uniquely given by the initial positions and velocities of the fundamental fields. This accounts to specify the induced metric $\gamma_{ij}$ and $\partial_t \gamma_{ij}$ at a given instant of time, that is, to specify all the metric components and their first time derivatives at a given spatial slice $x^0=t=\text{constant}$. The evolution equations must be supplemented with the constraints fulfilled by the fields and their derivatives, the well-known Hamiltonian and momenta constraints of General Relativity. Moreover, the study of the time evolution also requires breaking the spacetime covariance through the $3+1$ split of spacetime. Then the so-called ADM equations fully describe the nonlinear behavior of General Relativity, although it has been proven that they lack the required strong hyperbolicity in order to sustain computer simulations of them. For this purpose, it is resorted to the BSSNOK formalism \cite{Baumgarte:2010ndz, Baumgarte:2021skc}. 

In recent years, there has been a growing interest in modified theories of gravity \cite{CANTATA:2021ktz}, many of which have been formulated in the so-called metric-affine framework, which incorporates a dynamical connection in the field content. This framework includes different kinds of theories characterized by different constraints \textit{a priori} on some of their fields, for instance generic metric-affine theories \cite{HEHL19951} (no constraints), Poincaré gauge theories \cite{Baekler:2011jt} (metric-compatible connection), and also teleparallel theories \cite{Bahamonde:2021gfp, BeltranJimenez:2019odq, BeltranJimenez:2019esp} (zero curvature). In these theories, a frame (the {\it tetrad}) often appears as a fundamental variable instead of the metric, and local Lorentz symmetries often enter the game, allowing to remove some of these additional degrees of freedom. In fact, some theories might present extra copies of the Lorentz symmetry realized on different fields, but all of them affect the tetrad configuration (see \cite{Blixt:2022rpl}). These symmetry transformations connect different tetrad solutions, which will correspond to different observers, which are the timelike vectors of the frame.

The chosen observer plays a relevant role in the $3+1$ decomposition since the parameters of the associated congruence define the time function used to evolve the initial Cauchy hypersurface. It can be checked that not all timelike congruences have an associated spacelike foliation, as we will revise. The point is that there are many works in the literature for which a $3+1$ decomposition of spacetime has been adopted without checking if the solutions for the tetrads fulfill the conditions for the existence of a foliation. Even in the context of General Relativity, this condition may not always be satisfied \cite{poisson2004relativist, awmgr, Borowiec:2013kgx}. Therefore, it is crucial to carefully evaluate whether the procedures used for the $3+1$ decomposition remain valid in modified gravity theories. So far, a $3+1$ decomposition has been adopted for the Hamiltonian analysis in metric teleparallel gravity (see \cite{Blixt:2020ekl} for a review) and for considering the evolution equations \cite{Capozziello:2021pcg} and Hamilton's equations \cite{Pati:2022nwi} for the (Metric) Teleparallel Equivalent of General Relativity. 
The same has been done for the Hamiltonian analysis of Poincaré gauge theories of gravity \cite{Blagojevic:2000xd} and Einstein-Cartan theory \cite{Kiriushcheva:2010sss, Kiriushcheva:2009tg}.

The aim of this work is to provide a comprehensive clarification of the mathematical concept of foliation within the framework of a $3+1$ decomposition and the implications of performing Lorentz transformations on a given observer with an associated foliation. 

This article is structured as follows: In Sec.~\ref{sec:Math} the mathematical preliminaries are introduced in two parts: in Sec.~\ref{sec:Tetrad} we give the notion of tetrads and tetrad formulation of General Relativity and other theories, and Sec.~\ref{sec:Foliations} introduces the formalism for foliations and the kinematic characteristics of a congruence of observers. We also present the conditions required for the existence of a foliation associated to a given observer. Then, Sec.~\ref{sec:Condition} explores how these conditions behave under Lorentz transformations. In Sec.~\ref{sec:Examples} we provide some explicit examples of observers that do, and do not, satisfy the condition for the existence of foliation. Finally, Sec.~\ref{sec:Conclusions} is devoted to a summary and a brief discussion, as well as the conclusions of this work. At the end, we also included the Appendix~\ref{app:distrib}, which contains more details on the relation between distributions and foliations.

~

\noindent{\bf Conventions.}
We work in arbitrary spacetime dimension $n$, although the examples given are four-dimensional. Our notation is such that Greek letters $\mu,\nu,\rho,\ldots$ denote spacetime indices. The indices with respect to an orthonormal frame or tetrad (Lorentz indices) are denoted by the uppercase first letters of the Latin alphabet $A,B,C,\ldots$, running from $0$ to $n-1$, and the spatial directions of the tetrad are covered by the indices $i,j$. We choose the convention $(-,+,+,+)$ for the metric signature. $\Gamma$ and $\nabla$ denote the Levi-Civita connection and covariant derivative, and we will often abbreviate $\nabla_{\!\boldsymbol{X}} \equiv X^\mu \nabla_\mu$.

\section{Mathematical preliminaries}
\label{sec:Math}

In the following section, we will focus on introducing tetrad formulation and the geometric set-up required by the $3+1$ formalism. We will start with basic definitions and theorems for the reader's convenience: we will recall the notions of a tangent distribution, foliation, and almost-product structure, where the last one is a useful tool to study geometry and an integrability of distributions (that is, foliations). In the last part, we will relate some of those geometric structures with an observer's motion and how the $3+1$ decomposition depends on it. We will also provide the corresponding conclusions in the language of tetrad, which will be useful to study teleparallel theories of gravity. 

\subsection{Tetrad theories of gravity}
\label{sec:Tetrad}

In any metric manifold $(\mathcal{M}, g)$, we construct a field of tetrads (a basis of orthonormal vectors $\{e_A\}$) at each point of $\mathcal{M}$ with components $e_A{}^{\mu}$. If we denote the components of the dual cotetrad as $E^{A}{}_{\mu}$, the following relations hold:
\begin{equation}
    g_{\mu\nu} = \eta_{AB} E^A{}_{\mu} E^B{}_{\nu}, \qquad   \eta_{AB} = g_{\mu\nu} e_{A}{}^{\mu} e_{B}{}^{\nu}.
    \label{eq:fundrel}
\end{equation}
By definition, the tetrad and cotetrad components also satisfy orthonormality relations
\begin{equation}
E^A{}_{\mu}  e_B{}^{\mu} = \delta^{A}{}_{B}, \qquad E^A{}_{\mu}  e_A{}^{\nu} = \delta^{\nu}{}_{\mu}.
\label{eq:orth}
\end{equation}
We can transform Lorentz indices of a vector $V^{A}$ into spacetime indices $V^{\mu}$ with the components of the tetrad such that $V^{\mu} = e_{A}{}^{\mu} V^{A}$, and the opposite $V_{A} = e_{A}{}^{\mu} V_{\mu}$. Lorentz indices are raised and lowered with the Minkowski metric, while spacetime indices are raised and lowered with the spacetime metric.

Using the aforementioned relations, it is straightforward to reformulate any metric theory in terms of tetrads. For instance, after expressing the volume element as $ \sqrt{-g}=|\det{(E^A{}_\mu)}|$ and substituting the metrics appearing in the Ricci scalar by tetrads, the tetrad formulation of General Relativity can be obtained with no difference in physical predictions. Although the 10 components of $g_{\mu\nu}$ are promoted to 16 of $e_A{}^\mu$ in the tetrad formulation, a local Lorentz symmetry is introduced in the theory, and those additional components can be eliminated by choosing an appropriate gauge (see e.g. \cite{Ortin}).

\subsection{Foliations in physics}
\label{sec:Foliations}

\subsubsection{Introduction and basic examples}

Consider a spacetime $(\mathcal{M},g)$ of dimension $n$, i.e., a smooth manifold equipped with a Lorentzian metric. We denote $g_{\mu\nu}$ the component of the metric tensor with respect to some arbitrary coordinates. Before recalling the basic notions in a more formal way, let us get familiar with the notion of foliation and its relation to the choice of an observer. In an open set $U\subset\mathcal{M}$ (which can be the whole spacetime), any smooth normalized timelike vector field $u^\mu$  will be called an \emph{observer} on $U$. In our convention,  $g_{\mu\nu}u^\mu u^\nu=-1$. Notice that an observer can be understood as an arrow of time in the region it is defined. An arbitrary observer can always be written locally as 
\begin{equation}\label{constant}
    u^\mu = (1,0,\ldots,0).
\end{equation}
In other words, we can always find a set of coordinates in which $u^\mu\partial_\mu = \partial_t$ is the first vector of the associated basis, and the corresponding coordinate $t$ plays the role of time.  
When we are able to do this globally in $\mathcal{M}$, we can think about the spacetime manifold $ \mathcal{M}$ as a composition of spacelike hypersurfaces such that $t=\text{const}$ for each of them, and such that they are diffeomorphic to some $(n-1)$-dimensional manifold $\Sigma$. In other words, the entire spacetime admits a splitting of the form:
\begin{equation}\label{global}
  \mathcal  M=\mathcal{T}\times\Sigma,
\end{equation}
where the factor $\mathcal{T}$ of the product corresponds to the 1-dimensional integral curves of the observer field $u^\mu$. When this happens, we say that the spacetime is (globally) foliated by the hypersurfaces of constant $t$, all of them diffeomorphic to $\Sigma$. If, in addition, the spatial hypersurfaces are Cauchy, the spacetime is called globally hyperbolic. Notice that such a splitting might not be possible for some particular spacetimes and chosen observers. To see that, see an example in the subsection \ref{geoobs}. The simplest example of the global splitting is provided for the four-dimensional Minkowski spacetime by the constant vector field \eqref{constant}. The counterexample within the same spacetime is the Rindler observer (where $x$ and $t$ are Cartesian coordinates):
\begin{equation}\label{rindler}
u^\mu = \left(\frac{x}{x^2-t^2}, \frac{t}{x^2-t^2}, 0,..., 0\right),
\end{equation}
which foliates only a quarter of the spacetime, with the remainder hidden behind an event horizon from the perspective of this observer.

Now on, let us formalize a bit the above discussion. 

\subsubsection{Distributions of codimension one. Foliations}

Let $(\mathcal{M},g)$ be an $n$-dimensional pseudo-Riemannian manifold and $T\mathcal{M}$ denote its 
tangent bundle. A (regular) $k$-distribution $\mathcal{D}$ is a smooth map that assigns to any point $p\in\mathcal{M}$ a subspace (hyperplane) of $T_p\mathcal{M}$ of dimension $k$:
\begin{equation}
    \mathcal{D}\,: \,p\rightarrow\;\mathcal{D}_p\subset T_p\mathcal{M}\,.
\end{equation}
In other words, a $k$-distribution is a $k$-dimensional subbundle $\mathcal{D}$ of $T\mathcal{M}$. Locally on any open set $U\subset\mathcal{M}$, a $k$-distribution is generated by a set of
$k$ linearly independent vector fields on $U$, $\{X_i\}^k_{i=1}$. These vector fields, at each point $p\in U$, span the $k$-dimensional subspace $\mathcal{D}_p\subset T_p\mathcal{M}$, i.e., $\mathcal{D}_p=\mathrm{span}\{X_1(p),...,X_k(p)\}$.\footnote{
    In the language of fiber bundles, $X_i\in\Gamma(\mathcal{D})$, where $\Gamma$ denotes the space of sections.} 
The distribution $\mathcal{D}$ is said to be involutive when the Lie bracket of any two vector fields of the basis $\{X_i\}$ is also a vector field belonging to the distribution.

In a physical context, we are mainly interested in distributions that are \emph{completely integrable}. Let us introduce this concept. First, consider an embedded submanifold $\mathcal{N}\subset\mathcal{M}$ such that $T_p\mathcal{N}=\mathcal{D}_p$ in every
point $p\in\mathcal{N}$. Then $\mathcal{N}$ is called an integral manifold of the distribution $\mathcal{D}$. Then, a $k$-distribution $\mathcal{D}$ is completely integrable if for each point $p\in\mathcal{M}$ there exists an integral $k$-dimensional manifold $\mathcal{N}$ of $\mathcal{D}$ passing through $p$. Local Frobenius theorem \cite{lee2012manifolds} states that  involutive distributions are completely integrable. Notice that, trivially, this shows that every smooth $1$-dimensional distribution is integrable.

There is another result called the Global Frobenius theorem (see also \cite{lee2012manifolds}) that provides a relation between integrability and the existence of foliations. Accordingly, the collection of all maximal
connected integral manifolds of an involutive $k$-dimensional distribution $\mathcal{D}$ on a smooth manifold $\mathcal{M}$ forms a foliation of $\mathcal{M}$. In other words, a foliation is a collection of submanifolds $\mathcal{N}_i$ in a way that each of $\mathcal{N}_i$ goes on smoothly and into each other and does not intersect each other.
More rigorous and detailed discussion can be found in \cite{lee2012manifolds,awmgr,Borowiec:2013kgx}.  

Let us focus now on the case of distributions of codimension one, i.e., $\mathcal{D}$ is an $(n-1)$-distribution on the $n$-dimensional pseudo-Riemannian manifold $(\mathcal{M},g)$. We further assume that the hyperplanes of the distribution are everywhere non-null. Then, at every point $p\in\mathcal{M}$
\begin{equation}\label{decomp1}
T_p\mathcal{M}=\mathcal{D}_p\oplus_g\mathcal{D}_p^\perp. 
\end{equation}
In general, those distributions are not complementary. As $\mathrm{dim}\mathcal{D}_p=n-1$, we deduce that $\mathrm{dim}\mathcal{D}_p^\perp=1$, therefore it is integrable, as discussed before. If $\mathcal{D}$ satisfies the conditions of being involutive (and therefore, completely integrable by the local Frobenius theorem) and further of being a foliation, we may consider a particular example of the global foliation \cite{hawking2023large} as discussed before.

\subsubsection{Geometry of foliations}\label{geoobs}

Geometry of distributions and foliations can be easily studied with the use of Naveira's classification of the almost-product manifolds \cite{naveira1983classification} and the properties \cite{gil1983geometric} of the almost-product structure on $\mathcal{M}$ \cite{gray1967pseudo,yano1985structures}. A reader who is interested in the rigorous derivation and discussion is invited to study the mentioned references. We have also provided a nutshell discussion in Appendix \ref{app:distrib}. Since we will only focus on the physical meaning provided by this formalism, as presented in \cite{Borowiec:2013kgx}, in what follows, we will just recall the main results relevant for the further discussion.

\begin{defi}\label{defkinem}
Let $u^\mu$ be an observer $n$-field. 
The kinetic characteristics of the observer's motion through the spacetime are the vorticity tensor, the shear tensor, the expansion scalar, and the acceleration vector, respectively, given by
\begin{align}
\omega_{\mu\nu}&:= \nabla_{[\nu} u_{\mu]}+\dot{u}_{[\mu}u_{\nu]}\,,\\
\sigma_{\mu\nu}&:=\nabla_{(\nu}u_{\mu)}+\dot{u}_{(\mu}u_{\nu)}-\frac{1}{n-1}\Theta h_{\mu\nu}\,,\label{shear}\\
\Theta&:= \nabla_\rho u^\rho\,,\label{expan}\\
\dot{u}^\mu&:=u^\rho \nabla_\rho u^\mu\,,
\end{align}
where the brackets $[~]$ and $(~)$ denote antisymmetrization and symmetrization, respectively, while $h_{\mu\nu}:=g_{\mu\nu}+u_{\mu}u_{\nu}$ is the tensor projecting on a three-dimensional subspace.
\end{defi}

Let us notice that the acceleration vector, vorticity, and shear tensors are spacelike fields, since $u^{\mu}\dot{u}_{\mu}=u^{\mu}\omega_{\mu\nu}=u^{\mu}\sigma_{\mu\nu} =0.$
The above characteristics of the observer's motion are the irreducible components of the projected tensor $\nabla_\nu u_{\mu}$, which determines the rate of change in
the position of one point with respect to the other one in the fluid material \cite{Ehlers:1961xww,plebanski2024introduction}. As already discussed, we will focus on theories of gravity in which the matter is solely coupled to the metric, and the observer field represents a 4-velocity of fluid. Therefore, this decomposition is done with respect to the Levi-Civita connection compatible with the metric.

As it turns out, the above kinetic characteristics are related to the geometry of distributions and foliations. For the reader's convenience, let us recall the following theorems \cite{awmgr,Borowiec:2013kgx}:

\begin{thm}
    Let $u^\mu$ be an observer $n$-field. The $1$-distribution (foliation) is totally geodesic if $\dot u^\mu=0$.
\end{thm}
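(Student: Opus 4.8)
The plan is to reduce the notion of a totally geodesic $1$-distribution to the vanishing of its second fundamental form, and then to identify the latter directly with the acceleration $\dot u^\mu$. Recall that the distribution under consideration is the $1$-distribution $\mathcal{D}$ spanned at each point by $u^\mu$; by the local Frobenius theorem it is automatically integrable, and its leaves are the integral curves of $u^\mu$. A one-dimensional foliation is \emph{totally geodesic} precisely when the second fundamental form of its leaves vanishes, equivalently when those leaves are geodesics of $(\mathcal{M},g)$. So the whole task is to express that second fundamental form in terms of the quantities of Definition~\ref{defkinem}.

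First I would introduce the orthogonal projectors adapted to the almost-product decomposition $T_p\mathcal{M}=\mathcal{D}_p\oplus_g\mathcal{D}_p^\perp$ of Eq.~\eqref{decomp1}. Since $g_{\mu\nu}u^\mu u^\nu=-1$, the projector onto $\mathcal{D}^\perp$ is exactly $h^\mu{}_\nu=\delta^\mu_\nu+u^\mu u_\nu$ introduced in Definition~\ref{defkinem}, while the projector onto $\mathcal{D}$ is $-u^\mu u_\nu$. The second fundamental form of $\mathcal{D}$ is the $\mathcal{D}^\perp$-part of the ambient covariant derivative of leaf-tangent vectors; as $\mathcal{D}$ is one-dimensional it suffices to evaluate it on $u^\mu$ itself, giving
\begin{equation}
II^\mu = h^\mu{}_\nu\, u^\rho\nabla_\rho u^\nu = h^\mu{}_\nu\,\dot u^\nu.
\end{equation}

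The key step is then the algebraic identity $h^\mu{}_\nu\dot u^\nu=\dot u^\mu+u^\mu(u_\nu\dot u^\nu)=\dot u^\mu$, where I use $u_\nu\dot u^\nu=0$ (noted right after Definition~\ref{defkinem}, and itself obtained by differentiating the normalization $u_\mu u^\mu=-1$ along $u$). Hence the $\mathcal{D}^\perp$-part of $\dot u^\mu$ is the whole of $\dot u^\mu$: the acceleration is purely extrinsic to the leaf, which is exactly what the normalization of the observer guarantees. Therefore $II^\mu=\dot u^\mu$, and the hypothesis $\dot u^\mu=0$ immediately yields $II^\mu=0$, i.e.\ the $1$-distribution (foliation) is totally geodesic. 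The same computation in fact delivers the converse, so the condition is an equivalence.

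I expect the only genuine obstacle to be conceptual rather than computational: making the definition of ``totally geodesic'' for the one-dimensional foliation precise within the almost-product/Naveira formalism referenced above, and checking that for a one-dimensional leaf the second fundamental form reduces to the single normal vector $II^\mu$ with no tangential contribution to subtract off. Once that identification is secured, the result is immediate from $u_\nu\dot u^\nu=0$. The physical content is simply that the worldlines of a geodesic observer ($\dot u^\mu=0$) are geodesics of the spacetime, so the leaves of the associated $1$-foliation sit geodesically inside $\mathcal{M}$.
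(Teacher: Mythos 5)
Your proof is correct and follows essentially the same route as the paper, which defers the argument to its Appendix~\ref{app:distrib} machinery and the references \cite{awmgr,Borowiec:2013kgx}: there ``totally geodesic'' for the $1$-distribution spanned by $u^\mu$ amounts to the vanishing of the normal projection $h^\mu{}_\nu\, u^\rho\nabla_\rho u^\nu$, and your identity $h^\mu{}_\nu \dot u^\nu = \dot u^\mu$ (from $u_\nu \dot u^\nu = 0$) is exactly that specialization, with the Frobenius integrability of the $1$-distribution also noted in the paper. Your observation that the condition is in fact an equivalence is likewise consistent with the underlying Gil-Medrano characterization, since $(\nabla_{\!u}P)u = -2\dot u^\mu$ for $P^\mu{}_\nu=\delta^\mu{}_\nu+2u^\mu u_\nu$.
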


\begin{thm}\label{th:condit}
    Let $u^\mu$ be an observer $n$-field. The $(n-1)$-distribution is
\begin{itemize}
    \item  minimal if $\Theta=0$
    \item  umbilical if $\sigma_{\mu\nu}=0$
    \item  geodesic if $\sigma_{\mu\nu}=0$ and $\Theta=0$
    \item  a foliation when $\omega_{\mu\nu}=0$. The foliation is minimal, totally umbilical or totally geodesic if $\Theta=0$,  $\sigma_{\mu\nu}=0$, and  $\sigma=\Theta=0$, respectively.
\end{itemize}
\end{thm}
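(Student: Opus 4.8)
The plan is to encode all four geometric attributes of the orthogonal $(n-1)$-distribution $\mathcal{D}=u^\perp$ in a single tensor built from $\nabla u$, and then read each attribute off one irreducible piece of it. With the projector $h_{\mu\nu}=g_{\mu\nu}+u_\mu u_\nu$ already at our disposal, I would introduce the configuration (generalized second fundamental) tensor
\begin{equation}
B_{\mu\nu}:=h_\mu{}^\alpha h_\nu{}^\beta\,\nabla_\alpha u_\beta,
\end{equation}
whose pieces, under the decomposition $T_p\mathcal{M}=\mathcal{D}_p\oplus_g\mathcal{D}_p^\perp$, will be shown to coincide with the kinematic quantities of Definition~\ref{defkinem}.

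First I would evaluate $B_{\mu\nu}$. Expanding $h_\mu{}^\alpha=\delta_\mu^\alpha+u_\mu u^\alpha$ and using the two consequences of the normalization $u^\mu u_\mu=-1$, namely $u^\beta\nabla_\alpha u_\beta=\tfrac12\nabla_\alpha(u^\beta u_\beta)=0$ and $u^\alpha\dot u_\alpha=0$, the four terms collapse to $B_{\mu\nu}=\nabla_\mu u_\nu+u_\mu\dot u_\nu$. Splitting $B_{\mu\nu}$ into antisymmetric and symmetric parts, and the latter further into trace and trace-free pieces, I would then match it term by term against Definition~\ref{defkinem}, obtaining $h^{\mu\nu}B_{\mu\nu}=\Theta$, $B_{(\mu\nu)}=\sigma_{\mu\nu}+\tfrac{1}{n-1}\Theta\,h_{\mu\nu}$ and $B_{[\mu\nu]}=-\omega_{\mu\nu}$. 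This establishes the dictionary: the mean curvature of $\mathcal{D}$ is $\Theta$, its trace-free second fundamental form is $\sigma_{\mu\nu}$, and the obstruction to closure under the Lie bracket is $\omega_{\mu\nu}$.

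With this dictionary, the first three items and the qualifiers of the fourth follow directly from the standard definitions for distributions, formalized through the almost-product structure in Appendix~\ref{app:distrib}: $\mathcal{D}$ is \emph{minimal} iff its mean curvature vanishes, i.e. $\Theta=0$; \emph{totally umbilical} iff the trace-free part of $B_{(\mu\nu)}$ vanishes, i.e. $\sigma_{\mu\nu}=0$; and \emph{totally geodesic} iff $B_{(\mu\nu)}=0$, i.e. $\sigma_{\mu\nu}=0$ and $\Theta=0$ simultaneously.

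The main obstacle is the integrability claim, that $\mathcal{D}$ is a foliation exactly when $\omega_{\mu\nu}=0$. By the local and global Frobenius theorems recalled above, $\mathcal{D}$ is a foliation iff it is involutive, equivalently iff its annihilating one-form $\alpha=u_\mu\,\dd x^\mu$ satisfies $\alpha\wedge\dd\alpha=0$; since $\nabla$ is torsion-free, $\dd\alpha$ has components $2\nabla_{[\mu}u_{\nu]}$ and the Frobenius condition reads $u_{[\rho}\nabla_\mu u_{\nu]}=0$. I would then show this is equivalent to $\omega_{\mu\nu}=0$: the symmetric-in-$\mu\nu$ part of $\nabla_\mu u_\nu$ is killed by the total antisymmetrization, and substituting $B_{[\mu\nu]}=-\omega_{\mu\nu}$ while discarding the terms carrying two antisymmetrized $u$'s reduces the condition to $u_{[\rho}\omega_{\mu\nu]}=0$; a final contraction with $u^\rho$, using $u^\mu\omega_{\mu\nu}=0$ and $u^\mu u_\mu=-1$, isolates $\tfrac13\omega_{\mu\nu}$ and yields the equivalence. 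Combined with the previous paragraph, this also gives the refined classification of the foliation as minimal, totally umbilical, or totally geodesic according to $\Theta=0$, $\sigma_{\mu\nu}=0$, and $\sigma_{\mu\nu}=\Theta=0$, completing the argument.
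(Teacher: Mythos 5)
Your argument is correct, but it is worth noting that the paper never proves Theorem~\ref{th:condit} in-line: it defers to \cite{awmgr,Borowiec:2013kgx} and to Appendix~\ref{app:distrib}, where the result is organized through the almost-product structure $P^\alpha{}_\beta=\delta^\alpha{}_\beta+2u^\alpha u_\beta$ and the tensor $\mathcal{Q}(X,Y)=\Pi_\perp\big(\nabla_{\Pi X}(\Pi Y)\big)$, whose pieces $\mathcal{A}$, $\hat{\mathcal{S}}$, $\mathcal{T}$ \emph{define} foliation, umbilical and minimal, with the Gil-Medrano properties $D_1$--$D_3$ and $F$ (equivalently, the vanishing Nijenhuis tensor) doing the rest. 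Your projected tensor is that machinery in disguise: for $X,Y\in\Gamma(\mathcal{D})$ one has $u_\alpha\nabla_\beta Y^\alpha=-Y^\alpha\nabla_\beta u_\alpha$ and $\Pi_\perp{}^\rho{}_\alpha=-u^\rho u_\alpha$, hence $\mathcal{Q}^\rho{}_{\mu\nu}=u^\rho\,h_\mu{}^\alpha h_\nu{}^\beta\nabla_\alpha u_\beta$, so your decomposition into $-\omega_{\mu\nu}$, $\sigma_{\mu\nu}$ and $\tfrac{1}{n-1}\Theta h_{\mu\nu}$ reproduces exactly the appendix's $\mathcal{A}/\hat{\mathcal{S}}/\mathcal{T}$ split. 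That one-line identification is the only step you assert rather than compute, and you should write it out, because the appendix's definitions of minimal/umbilical/geodesic are phrased in terms of $\mathcal{Q}$, not in terms of a bare ``mean curvature''; without it, calling your projected tensor the second fundamental form of $\mathcal{D}$ presupposes the very dictionary being established. Where you genuinely diverge is the integrability item: instead of property $F$ or the Nijenhuis tensor, you use the codimension-one Frobenius criterion $\alpha\wedge\dd\alpha=0$ for the annihilator $\alpha=u_\mu\,\dd x^\mu$, reduce $u_{[\rho}\nabla_\mu u_{\nu]}=0$ to $u_{[\rho}\omega_{\mu\nu]}=0$ via torsion-freeness, and contract with $u^\rho$ (the contraction actually yields $-\tfrac13\omega_{\mu\nu}$, not $+\tfrac13$; harmless). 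This buys a fully elementary, self-contained proof that even upgrades the stated ``if'' to ``if and only if'' for the foliation condition, at the price of being special to codimension one, whereas the almost-product formalism of the appendix applies to distributions of arbitrary rank. One cosmetic caution: the paper already uses the symbol $B_{\mu\nu}$ for a different tensor in \eqref{eq:defB}, so rename your projected gradient to avoid a clash.
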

The definitions of the above notions related to the geometry of distribution and foliations are recalled in the Appendix~\ref{app:distrib}. The more detailed discussion and proof are provided in \cite{awmgr,Borowiec:2013kgx}. Note that $\omega_{\mu\nu}=0$ is equivalent to vanishing of the Nijenhuis tensor \cite{gray1967pseudo}.

Let us come back now to the Minkowski spacetime. It can be shown \cite{Borowiec:2013kgx} that the choice of the observers \eqref{constant} and \eqref{rindler} provide that the $(n-1)$-distribution is a foliation. A counterexample in four dimensions is the rotating observer in the $(x, y)$ plane
\begin{equation}\label{rot}
       u^\mu = \left(2,\frac{-y}{\sqrt{x^2+y^2}},\frac{x}{\sqrt{x^2+y^2}},0\right).
\end{equation}

\subsubsection{Relations between different observers}
For the purpose of the further part, let us consider two observers $u^\mu$ and $\tilde{u}^\mu$ (i.e., both of them normalized $u^\mu u_\mu = \tilde{u}^\mu \tilde{u}_\mu=-1$) on the same open set $U\subset M$. In general, 
\begin{equation}\label{transf}
    \tilde{u}^\mu = f u^\mu + b^\mu\,,
\end{equation}
for a certain real function $f$ and a vector field $b^\mu$ satisfying $u^\mu b_\mu=0$ and $b^\mu b_\mu = f^2 -1$. Notice that the orthogonality between $u^\mu$ and $b^\mu$ implies that $b_\mu$ is spacelike and that $b^\mu h_{\mu\nu} = b_\nu$. The metric can be split according to each of these observers, leading to two different projector tensors:
\begin{equation}
    g_{\mu\nu} = - u_\mu u_\nu + h_{\mu\nu} = - \tilde{u}_\mu \tilde{u}_\nu + \tilde{h}_{\mu\nu}\,.
\end{equation}
 Notice that this implies:
\begin{equation}
    h_{\mu\nu} = (1-f^2)u_{\mu}u_\nu-2 f u_{(\mu} b_{\nu)} - b_\mu b_\nu - \tilde{h}_{\mu\nu}\,,
\end{equation}
leading to:
\begin{equation}
    u^\mu \tilde{h}_{\mu\nu} = (f^2-1)u_\nu + f b_\nu = - \frac{1}{f} b^\mu \tilde{h}_{\mu\nu} \,,
\end{equation}
consistent with $\tilde{u}^\mu \tilde{h}_{\mu\nu}=0$. 

Therefore, we can consider the decomposition of both covariant derivatives:
\begin{align}
    \nabla_\rho u_\mu &= \omega_{\mu\rho}+\sigma_{\mu\rho}+\frac{1}{n-1} h_{\mu\rho} \Theta -\dot{u}_\mu u_\rho \,,\label{eq:decom_Du}\\
    \nabla_\rho \tilde{u}_\mu &= \tilde{\omega}_{\mu\rho}+\tilde{\sigma}_{\mu\rho}+\frac{1}{n-1} \tilde{h}_{\mu\rho} \tilde{\Theta} -\tilde{\dot{u}}_\mu \tilde{u}_\rho \,,
\end{align}
where $\tilde{\dot{u}}^\mu:=\tilde{u}^\rho\nabla_\rho \tilde{u}^\mu$. The characteristic quantities associated to both observers are related by:
\begin{align}
    \tilde{\dot{u}}^\mu &= f^2 \dot{u}^\mu +f u^\mu  \Du f + f \Du b^\mu + \Db(f u^\mu + b^\mu)\,,\\[5pt]
    \tilde{\Theta} &= f \Theta + \Du f + \nabla_\rho b^\rho\,,\\[5pt]
    \tilde{\omega}_{\mu\nu}
    &= f \omega_{\mu\nu} + A_{[\mu\nu]}\,,\label{eq:rel_rot2} \\[5pt]
    \tilde{\sigma}_{\mu\nu}&= f \sigma_{\mu\nu} + A_{(\mu\nu)} - \frac{1}{n-1} B_{\mu\nu} \,,
\end{align}
where
\begin{align}
    A_{\mu\nu} &:= u_{\mu} \nabla_{\nu}f + \nabla_{\nu} b_{\mu} + f \dot{u}_{\mu} \Big(f \tilde{u}_{\nu} - u_{\nu}\Big) +   u_{\mu} b_{\nu} \Dtu f\nonumber\\
    &\qquad
    +\Big(\Dtu b_{\mu}+f\chi_{\mu}\Big)\tilde{u}_{\nu} + f u_\mu u_\nu \Dtu f \,,\label{eq:defA}\\
    B_{\mu\nu} &:= \tilde{h}_{\mu\nu}\tilde{\Theta} - f h_{\mu\nu}\Theta \\
    &=  \big(\tilde{u}_\mu\tilde{u}_\nu - u_\mu u_\nu + h_{\mu\nu}\big) \big(\Du f + \nabla_\rho b^\rho\big) +  \big(\tilde{u}_\mu\tilde{u}_\nu - u_\mu u_\nu\big)f\Theta \,,\label{eq:defB}
\end{align}
with
\begin{equation}
    \chi_\mu := b^\rho \nabla_\rho u_\mu =  \omega_{\mu\rho}b^\rho+\sigma_{\mu\rho}b^\rho+\frac{1}{n-1}  b_{\mu}\Theta \,.
\end{equation}
In addition, although the right-hand sides of \eqref{eq:defA} and \eqref{eq:defB} are intended to be seen as functions of the basic variables $\{u^\mu, f, b^\mu\}$ and the derivatives of $\{f, b^\mu\}$, we kept some $\tilde{u}^\mu$ (which can be eliminated with \eqref{transf}) to shorten the equation.

If we focus on \eqref{eq:rel_rot2} and expand $\chi_\mu$ we get:
\begin{align}
    \tilde{\omega}_{\mu\nu} &= f \omega_{\mu\nu} + u_{[\mu} \nabla_{\nu]}f + \nabla_{[\nu} b_{\mu]}+ f \dot{u}_{[\mu} \Big(f \tilde{u}_{\nu]} - u_{\nu]}\Big)\nonumber\\
    &\quad +   u_{[\mu} b_{\nu]} \left(\Dtu f-\frac{1}{n-1}\Theta f^2\right)+\Big[\Dtu b_{[\mu}+fb^\rho (\sigma_{\rho[\mu}-\omega_{\rho[\mu})\Big]\tilde{u}_{\nu]}\,.\label{eq:rel_rot3}
\end{align}
If both observers $u^\mu$ and $\tilde{u}^\mu$ define (not necessarily identical) foliations, we can drop both vorticities and find
\begin{align}
    0 &= u_{[\mu} \nabla_{\nu]}f + \nabla_{[\nu} b_{\mu]}+ f \dot{u}_{[\mu} \Big(f \tilde{u}_{\nu]} - u_{\nu]}\Big) \nonumber\\
    &\quad+   u_{[\mu} b_{\nu]} \left(\Dtu f-\frac{1}{n-1}\Theta f^2\right)
    +\Big[\Dtu b_{[\mu}+fb^\rho \sigma_{\rho[\mu}\Big]\tilde{u}_{\nu]}\,,\label{eq:rel_2fol}
\end{align}
For a fixed initial observer $u^\mu$, this is a system of differential equations for $\{f,b^\mu\}$ or, equivalently, $\{\tilde{u}^\mu\}$ (i.e., $n-1$ variables, due to the normalization condition).

\section{Foliation condition under Lorentz transformations} \label{sec:Condition}

The tetrad basis can always be oriented to the observer 4-field $u^{\mu}$ as 
\begin{align}
    u^{\mu}=e_{0}{}^{\mu}, \qquad u_{\mu} = g_{\mu\nu}e_{0}{}^{\nu}.
\end{align}
Another observer can be related by a Lorentz transformation of $u^\mu$
\begin{align}
\label{eq:LorentzObs}
    \tilde{u}^\mu=\tilde{e}_{0}{}^\mu = \Lambda^A{}_{0}e_A{}^\mu,
\end{align}
which can be written as \eqref{transf} when $f=\Lambda^0{}_{0}$ and $b^\mu=\Lambda^{i}{}_{0}e_{i}{}^{\mu}$.

A general Lorentz matrix can be generated by boosts and rotations. A \textit{ pure Lorentz rotation} satisfy $\Lambda^A{}_{0}=\delta^A{}_{0}$ and $\Lambda^{0}{}_{A}=\delta^{0}{}_{A}$. It is easy to see that
the following statements are true:

\begin{prop}\label{prop_rotation}
    Assume $u^\mu$ admits foliation. Then the observer corresponding to a pure Lorentz rotation also admits foliation. Moreover, the geometry of the resulting foliation is unchanged with respect to the previous one.
\end{prop}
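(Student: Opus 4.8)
The plan is to observe that the statement is essentially a tautology once the defining property of a pure Lorentz rotation is translated into the language of \eqref{transf}. By definition a pure rotation satisfies $\Lambda^A{}_0=\delta^A{}_0$, so from \eqref{eq:LorentzObs} the transformed observer is $\tilde{u}^\mu=\Lambda^A{}_0 e_A{}^\mu=\delta^A{}_0 e_A{}^\mu=e_0{}^\mu=u^\mu$. Equivalently, the parametrization \eqref{transf} yields $f=\Lambda^0{}_0=1$ and $b^\mu=\Lambda^i{}_0 e_i{}^\mu=0$. In other words, a pure rotation leaves the timelike leg of the tetrad, and hence the observer field itself, completely invariant; it merely rotates the spatial legs $e_i{}^\mu$ among themselves.

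First I would record this identity $\tilde{u}^\mu=u^\mu$. From it the whole statement follows at once: since the observer is unchanged, its orthogonal $(n-1)$-distribution $\mathcal{D}=u^\perp$ coincides pointwise with the original one (the rotated spatial legs still span the same hyperplane $u^\perp$ at each point), and therefore so do all of its kinematic characteristics. In particular the vorticity obeys $\tilde{\omega}_{\mu\nu}=\omega_{\mu\nu}=0$, so by Theorem~\ref{th:condit} the distribution associated with $\tilde{u}^\mu$ is again a foliation. Because $\tilde{\sigma}_{\mu\nu}=\sigma_{\mu\nu}$, $\tilde{\Theta}=\Theta$, and $\tilde{\dot{u}}^\mu=\dot{u}^\mu$ as well, the foliation is of exactly the same geometric type (minimal, totally umbilical, or totally geodesic), which establishes the ``moreover'' clause.

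As a consistency check I would substitute $f=1$ and $b^\mu=0$ directly into the general transformation relations, using $\nabla_\mu f=0$, $\nabla_\nu b_\mu=0$, $\chi_\mu=0$, and $\tilde{u}^\mu=u^\mu$. Then the auxiliary tensor $A_{\mu\nu}$ in \eqref{eq:defA} collapses term by term to $\dot{u}_\mu(u_\nu-u_\nu)=0$, and $B_{\mu\nu}$ in \eqref{eq:defB} vanishes identically, confirming $\tilde{\omega}_{\mu\nu}=\omega_{\mu\nu}$, $\tilde{\sigma}_{\mu\nu}=\sigma_{\mu\nu}$, and $\tilde{\Theta}=\Theta$. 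There is no serious obstacle here: the only point requiring care is the correct reading off of $f$ and $b^\mu$ from the pure-rotation conditions $\Lambda^A{}_0=\delta^A{}_0$, after which the result is immediate. I would emphasize that the nontrivial content lies not in this proposition but in the contrasting case of boosts, where $b^\mu\neq0$ and the vorticity genuinely transforms.
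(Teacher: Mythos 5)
Your proposal is correct and follows essentially the same route as the paper: both identify that a pure Lorentz rotation satisfies $\Lambda^A{}_0=\delta^A{}_0$, conclude via \eqref{eq:LorentzObs} that $\tilde{u}^\mu=u^\mu$, and then note that an unchanged observer trivially inherits the foliation and its geometry. Your added consistency check (substituting $f=1$, $b^\mu=0$ into the general transformation formulas for $\tilde{\omega}_{\mu\nu}$, $\tilde{\sigma}_{\mu\nu}$, $\tilde{\Theta}$) goes slightly beyond the paper's proof but is consistent with it and correct.
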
 
\begin{proof}
    A pure Lorentz rotation matrix ${}^\mathrm{R}\Lambda^A{}_B$ is given by
\begin{align}
    {}^\mathrm{R}\Lambda^A{}_B =\begin{pmatrix}
    1 & 0 & 0 & 0 \\
    0 & & & & \\
    0 & & {}^\mathrm{R}\Lambda^{i}{}_{j} & \\
    0 & & & & 
    \end{pmatrix}.
\end{align}
From eq \eqref{eq:LorentzObs} it follows that
\begin{align}\label{trivial}
    {}^R\tilde{u}^\mu={}^{\mathrm{R}}\Lambda^{A}{}_{0}e_A{}^\mu{}=e_{0}{}^\mu=u^\mu.
\end{align}
Since $u^\mu$ admits foliation and ${}^{\mathrm{R}}\tilde{u}^\mu=u^\mu$, we conclude that ${}^{\mathrm{R}}\tilde{u}^\mu$ also admits foliation with unchanged geometry. 
\end{proof}

However, Lorentz boosts generically give $\Lambda^A{}_{0}\neq \delta^A{}_{0}$. Thus, there is a condition required to understand if boosted observers admit foliation. From the discussion in the previous section, we have found that for the transformation of the type \eqref{transf}, it requires vanishing of the transformed vorticity tensor \eqref{eq:rel_rot3}.
Since we are interested in Lorentz transformations of the observer field which preserve the foliated structure of the spacetime manifold, let us propose the following:
\begin{thm}\label{th:cond_foliation}
    Let the function $f$ be $f = \Lambda^0{}_{0}$ while the vector $b^\mu$ is given by $b^\mu=\Lambda^{i}{}_{0}e_{i}{}^\mu$. Let the observer $u^\mu=e_{0}{}^\mu$ satisfy the condition for the foliation. Then, the observer $\tilde u^\mu=\Lambda^A{}_{0}e_A{}^\mu$ transformed by the Lorentz transformation will also admit foliations if the following equation is satisfied:
    \begin{align}\label{theo}
    0 &=  u_{[\mu} \nabla_{\nu]}f + \nabla_{[\nu} b_{\mu]} + f \dot{u}_{[\mu} \Big(f \tilde{u}_{\nu]} - u_{\nu]}\Big) \\
    &+   u_{[\mu} b_{\nu]} \left(\Dtu f-\frac{1}{n-1}\Theta f^2\right)
    +\Big[\Dtu b_{[\mu}+fb^\rho \sigma_{\rho[\mu}\Big]\tilde{u}_{\nu]}.\,\nonumber 
    \end{align}
\end{thm}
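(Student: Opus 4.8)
The plan is to reduce the statement to the vanishing-vorticity criterion for foliations established in Theorem~\ref{th:condit} and then to exploit the already-derived transformation law for the vorticity tensor, equation~\eqref{eq:rel_rot3}. By Theorem~\ref{th:condit}, the $(n-1)$-distribution orthogonal to a given observer is a foliation precisely when its vorticity tensor vanishes. Hence the hypothesis that $u^\mu = e_0{}^\mu$ admits a foliation translates into $\omega_{\mu\nu}=0$, and the conclusion we seek is equivalent to $\tilde\omega_{\mu\nu}=0$ for the boosted observer $\tilde u^\mu$.

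First I would verify that the Lorentz transformation \eqref{eq:LorentzObs} is a legitimate instance of the generic observer map \eqref{transf}, i.e.\ that the identifications $f=\Lambda^0{}_0$ and $b^\mu=\Lambda^i{}_0\,e_i{}^\mu$ satisfy the two constraints $u^\mu b_\mu=0$ and $b^\mu b_\mu = f^2-1$. Both follow from the defining relation of a Lorentz matrix, $\eta_{AB}\Lambda^A{}_C\Lambda^B{}_D=\eta_{CD}$. Contracting $b^\mu$ with $u_\mu=g_{\mu\nu}e_0{}^\nu$ and using the orthonormality \eqref{eq:orth} together with \eqref{eq:fundrel} gives $u^\mu b_\mu = \Lambda^i{}_0\,\eta_{0i}=0$, since $\eta_{0i}=0$; and the $(C,D)=(0,0)$ component of the Lorentz relation reads $-f^2+\delta_{ij}\Lambda^i{}_0\Lambda^j{}_0=-1$, which upon using $b^\mu b_\mu = \delta_{ij}\Lambda^i{}_0\Lambda^j{}_0$ is exactly $b^\mu b_\mu=f^2-1$. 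This ensures that all the kinematic transformation formulas of Sec.~\ref{sec:Foliations}, in particular \eqref{eq:rel_rot3}, apply verbatim with these values of $f$ and $b^\mu$.

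With applicability secured, the remaining step is immediate: I would substitute $\omega_{\mu\nu}=0$ into the general relation \eqref{eq:rel_rot3}, which removes the leading term $f\omega_{\mu\nu}$ and leaves $\tilde\omega_{\mu\nu}$ equal to the bracketed expression built from $\{u^\mu,f,b^\mu\}$ and the derivatives of $\{f,b^\mu\}$. The boosted observer $\tilde u^\mu$ then admits a foliation exactly when this residual expression vanishes, and setting it to zero reproduces precisely \eqref{theo}. Thus \eqref{theo} is the sought sufficient condition, and the claim follows.

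The proof carries no deep obstacle, since the algebraic heavy lifting was already absorbed into the derivation of \eqref{eq:rel_rot3}; the only point requiring genuine care is the constraint check in the second step, ensuring that the Lorentz data $(\Lambda^0{}_0,\Lambda^i{}_0)$ really do define an admissible pair $(f,b^\mu)$ so that the generic formulas remain valid. I would also emphasize that \eqref{theo} coincides formally with \eqref{eq:rel_2fol}, yet the logical content differs: \eqref{eq:rel_2fol} was obtained by dropping both vorticities at once, whereas here only $\omega_{\mu\nu}=0$ is assumed a priori and $\tilde\omega_{\mu\nu}=0$ is imposed as the defining requirement for the transformed foliation.
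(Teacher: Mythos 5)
Your proposal is correct and matches the paper's own route exactly: the theorem is, in the paper, a direct corollary of the vorticity transformation law \eqref{eq:rel_rot3} combined with the foliation criterion $\omega_{\mu\nu}=0$ of Theorem~\ref{th:condit}, with both the leading $f\omega_{\mu\nu}$ and the $f b^\rho\omega_{\rho[\mu}$ term dropping under the hypothesis, leaving $\tilde\omega_{\mu\nu}$ equal to the right-hand side of \eqref{theo} (this is precisely how \eqref{eq:rel_2fol} was obtained). Your explicit verification that $(f,b^\mu)=(\Lambda^0{}_0,\Lambda^i{}_0 e_i{}^\mu)$ satisfies $u^\mu b_\mu=0$ and $b^\mu b_\mu=f^2-1$ via the Lorentz condition $\eta_{AB}\Lambda^A{}_C\Lambda^B{}_D=\eta_{CD}$ is a worthwhile detail the paper leaves implicit, and your closing remark correctly distinguishes the logical roles of \eqref{eq:rel_2fol} and \eqref{theo}.
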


\section{Physical examples}
\label{sec:Examples}

\subsection{Minkowski spacetime}

Let us briefly discuss observers living in the flat spacetime, that is, $\eta_{\mu\nu}=\text{diag}(-1,1,1,1)$.  Greek indices in this section represent components in the Cartesian frame of the Minkowski spacetime. 
As a starting point, let us consider the simplest tetrad,
\begin{equation}\label{eq:MinktrivialOb}
    e_A{}^\mu= \begin{pmatrix}
    1 & 0 & 0 & 0 \\
    0 & 1 & 0 & 0 \\
    0 & 0 & 1 & 0 \\
    0 & 0 & 0 & 1
    \end{pmatrix}\,,
\end{equation}
providing the observer $u^\mu:=e_{0}{}^\mu=(1,0,0,0)$. It is easy to check that the vorticity tensor $\omega_{\mu\nu}=0$ as well as the other kinematic characteristic are zeros. Therefore, such an observer provides the $3+1$ splitting with a totally geodesic foliation. Moreover, it is easy to see that $h_{\mu\nu}=\eta_{\mu\nu}+u_{\mu}u_{\nu}= \text{diag}(0,1,1,1)$ is an induced metric.

Something interesting happens with this trivial tetrad: Lorentz transformations applied to it can be directly translated into Lorentz transformations of the coordinates.
To see this, consider first an arbitrary tetrad $e_A{}^\mu$ and suppose that we perform a Lorentz transformation $\Lambda^A{}_B$ to obtain another tetrad $\tilde{e}_A{}^\mu$. We can then do the following:
\begin{equation}
    \tilde{e}_A{}^\mu = \Lambda^B{}_A e_B{}^\mu =  \Lambda^B{}_C e_B{}^\mu E^C{}_\nu e_A{}^\nu  = \left(e_B{}^\mu\Lambda^B{}_C  E^C{}_\nu\right)e_A{}^\nu \equiv \xi^\mu{}_\nu  e_A{}^\nu
\end{equation}
where $\xi^\mu{}_\nu := e_B{}^\mu\Lambda^B{}_C  E^C{}_\nu$. For the trivial tetrad \eqref{eq:MinktrivialOb} in Minkowski spacetime, we have that both the components of $e_A{}^\mu$ and $E^A{}_\mu$ are identity matrices. Then, for this particular case, the matrix $\xi^\mu{}_\nu$ is the same Lorentz transformation as $\Lambda^A{}_B$, but acting on the upper index of the tetrad. The observers are then directly related,
\begin{equation}
    \tilde{u}^\mu = \xi^\mu{}_\nu u^\nu\,,
\end{equation}
in contrast to \eqref{eq:LorentzObs}, which also involves the spatial tetrad.

\subsubsection{Boosted observer}\label{sec:boostedobs}

Considering a simple boost
\begin{align}\label{boost1}
    \xi^\mu{}_\nu =\begin{pmatrix}
    \cosh(\lambda(t,x,y,z)) & -\sinh(\lambda(t,x,y,z))& 0 & 0 \\
    -\sinh(\lambda(t,x,y,z)) & \cosh(\lambda(t,x,y,z)) & 0 & 0 \\
    0 & 0 & 1 & 0 \\
    0 & 0 & 0 & 1
    \end{pmatrix}.
\end{align}
of the tetrad discussed above . This operation generates a new tetrad, $\tilde{e}_A{}^\mu= \xi^\mu{}_\nu e_A{}^\nu$, and the boosted observer reads now
\begin{align}
    \tilde{u}^\mu:= \tilde{e}_{0}{}^\mu =(\cosh(\lambda),-\sinh(\lambda),0,0)\,,\label{eq:boostedu}
\end{align}
which corresponds to the previously defined function $f$ and vector $b^\mu$:
\begin{equation}
    f(\lambda)=\cosh(\lambda)\,,\qquad b^\mu(\lambda) = (0,-\sinh(\lambda),0,0)\,.
\end{equation}
For such an observer, the kinematic characteristics are
\begin{align}
    \tilde{\dot{u}}^\mu &= \Big(\cosh(\lambda)\sinh(\lambda) \lambda_t - \sinh^2(\lambda) \lambda_x\,,\ -\cosh^2(\lambda) \lambda_t - \cosh(\lambda)\sinh(\lambda) \lambda_x\,,\ 0,\ 0\Big)\,,\\
    \tilde\Theta&= \sinh(\lambda)\lambda_t - \cosh(\lambda)\lambda_x,\\
    \tilde\omega_{\mu\nu}&=-\frac{1}{2}\begin{pmatrix}
    0 & 0 &  \sinh(\lambda)\lambda_y & \sinh(\lambda)\lambda_z \\[0.5em]
      & 0 &  \cosh(\lambda) \lambda_y&  \cosh(\lambda)\lambda_z\\[0.5em]
    & & 0 & 0\\
    {\rm antis.} & & & 0
    \end{pmatrix}\,,\\
    \tilde\sigma_{\mu\nu}&= \begin{pmatrix}
    \frac{2}{3}\sinh^2(\lambda)\tilde\Theta & \frac{2}{3}\cosh(\lambda)\sinh(\lambda)\tilde\Theta & 
    -\frac{1}{2} \sinh(\lambda)\lambda_y & -\frac{1}{2} \sinh(\lambda) \lambda_z\\[0.5em]
    & \frac{2}{3}\cosh^2(\lambda)\tilde\Theta &-\frac{1}{2}  \cosh(\lambda)\lambda_y & -\frac{1}{2}\cosh(\lambda)\lambda_z\\[0.5em]
    & & -\frac{1}{3}\tilde\Theta & 0\\[0.5em]
    {\rm sym.} & & & -\frac{1}{3}\tilde\Theta
    \end{pmatrix}\,,
\end{align}
where we have defined $\lambda_\mu \equiv \partial_\mu \lambda $. Notice that, since $\cosh(\lambda)>0$, we can conclude:
\begin{prop}\label{result:boosted}
    Let $(\mathcal{M},\eta)$ be the Minkowski spacetime. The boosted observer \eqref{eq:boostedu} defines a foliation if and only if $\lambda_y=\lambda_z=0$. 
\end{prop}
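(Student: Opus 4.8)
The plan is to apply Theorem~\ref{th:condit}, which states that the $(n-1)$-distribution orthogonal to an observer is a foliation exactly when the vorticity tensor of that observer vanishes. Because the vorticity $\tilde\omega_{\mu\nu}$ of the boosted observer \eqref{eq:boostedu} has already been computed above, the entire argument reduces to determining when that antisymmetric tensor is identically zero.

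First I would list the independent nonvanishing entries of $\tilde\omega_{\mu\nu}$, namely the four components in the $0y$, $0z$, $xy$, and $xz$ positions, which are proportional to $\sinh(\lambda)\lambda_y$, $\sinh(\lambda)\lambda_z$, $\cosh(\lambda)\lambda_y$, and $\cosh(\lambda)\lambda_z$ respectively (all other entries vanishing identically). The crucial remark is that $\cosh(\lambda) \geq 1 > 0$ at every point, so the $xy$ and $xz$ entries vanish if and only if $\lambda_y = 0$ and $\lambda_z = 0$, pointwise on $\mathcal{M}$.

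For the forward implication I would note that a foliation forces $\tilde\omega_{\mu\nu} = 0$, in particular its $xy$ and $xz$ components; since $\cosh(\lambda)$ never vanishes, this yields $\lambda_y = \lambda_z = 0$. For the converse, substituting $\lambda_y = \lambda_z = 0$ makes all four listed entries vanish, hence the whole tensor vanishes and Theorem~\ref{th:condit} delivers the foliation. There is no genuine obstacle here given the precomputed vorticity; the only point deserving care is to extract the condition from the $\cosh$-weighted components rather than the $\sinh$-weighted ones, since $\sinh(\lambda)$ could vanish where $\lambda = 0$ whereas $\cosh(\lambda)$ cannot, which is precisely what makes the equivalence hold cleanly and everywhere.
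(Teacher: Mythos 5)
Your proof is correct and takes essentially the same route as the paper: both reduce the statement, via Theorem~\ref{th:condit}, to the vanishing of the precomputed vorticity $\tilde\omega_{\mu\nu}$ and extract $\lambda_y=\lambda_z=0$ from the $\cosh(\lambda)$-weighted ($xy$ and $xz$) components using $\cosh(\lambda)>0$. Your explicit caution that the $\sinh$-weighted entries alone would not suffice (since $\sinh(\lambda)$ can vanish) is precisely the observation the paper makes in passing when it notes ``since $\cosh(\lambda)>0$.''
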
 
Therefore, we can still deal with $3+1$ decomposition with the boosted observer, which provides a foliation, however with a distinct geometry. In comparison to the previous case, we are dealing now with a non-trivial embedding, since the second fundamental form (extrinsic curvature) with these coordinates is given by:
\begin{align}\label{boosted_proj}
     K_{\mu\nu}&= \big(\sinh(\lambda)\lambda_t-\cosh(\lambda)\lambda_x\big)\nonumber
    \times
     \begin{pmatrix}
    \sinh^2(\lambda) & \sinh(\lambda)\cosh(\lambda)& 0 & 0 \\
  \sinh(\lambda)\cosh(\lambda) & \cosh^2(\lambda) & 0 & 0 \\
    0 & 0 & 0 & 0 \\
    0 & 0 & 0 & 0
    \end{pmatrix}\,.
\end{align}

Under the condition from Proposition \ref{result:boosted}, the shear simplifies as:
\begin{align}
   \tilde\sigma_{\mu\nu}&= \frac{1}{3}\tilde\Theta\begin{pmatrix}
    2\sinh^2(\lambda) & 2\cosh(\lambda)\sinh(\lambda) & 0 & 0 \\[0.5em]
    2\cosh(\lambda)\sinh(\lambda)& 2\cosh^2(\lambda) & 0 & 0 \\[0.5em]
    0 & 0 & -1 & 0\\[0.5em]
    0 & 0 & 0  & -1
    \end{pmatrix}\,.
\end{align}
We observe that the resulting foliation has non-vanishing expansion and shear. Hence, in general, the solution does not meet the requisites to be minimal, umbilical, or geodesic. The induced metric on the hypersurfaces orthogonal to the observer is:
\begin{align}\label{boosted_proj}
     h_{\mu\nu}=\begin{pmatrix}
    \sinh^2(\lambda) & \sinh(\lambda)\cosh(\lambda)& 0 & 0 \\
  \sinh(\lambda)\cosh(\lambda) & \cosh^2(\lambda) & 0 & 0 \\
    0 & 0 & 1 & 0 \\
    0 & 0 & 0 & 1
    \end{pmatrix}\,.
\end{align}


\subsubsection{Observer under Lorentz rotations}

Let us now discuss rotating observers in Minkowski spacetime in more detail. Consider the 4-vector field as
\begin{align}\label{urot}
    \tilde{u}^\mu=\left(\sqrt{2},\frac{-y}{\rho},\frac{x}{\rho},0\right),\qquad \rho\equiv \sqrt{x^2+y^2}\,.
\end{align}
It is straightforward to calculate its acceleration and the rest of the kinematic characteristics:
\begin{align}
    \tilde{\dot{u}}^\mu &= -\rho^{-2} (0, x, y, 0)\,,\\
    \tilde{\Theta} &= 0\,,\\
    \tilde{\omega}_{\mu\nu}&=-\frac{1}{\sqrt{2}\rho^2}\begin{pmatrix}
    0 & x &  y  & 0 \\[0.5em]
      & 0 & \sqrt{2}\rho&  0\\[0.5em]
    & & 0 & 0\\
    {\rm antis.} & & & 0
    \end{pmatrix}\,,\\
    \tilde{\sigma}_{\mu\nu}&=\frac{1}{\sqrt{2}\rho^3}\begin{pmatrix}
    0 & x\rho &  y\rho  & 0 \\[0.5em]
      & 2\sqrt{2} xy &  \sqrt{2}(y^2-x^2)&  0\\[0.5em]
    & & -2\sqrt{2} xy & 0\\
    {\rm sym.} & & & 0
    \end{pmatrix}\,.
\end{align}
Since the vorticity is nonvanishing, this observer does not admit foliation.

The Lorentz transformation connecting any pair of timelike vectors is a boost that can be decomposed into a pure rotation (until the spatial part of the vector is aligned with one of the axis) and a boost with respect to such axis. In our case, $u^\mu=(1,0,0,0)$ and the observer \eqref{urot} can be connected via the composition of a rotation in the $xy$ plane and a boost in the $ty$ direction. The result is: 
\begin{align}\label{boost2}
    \xi^\mu{}_\nu=\frac{1}{\rho}\begin{pmatrix}
    \sqrt{2}\rho & 0 & \rho& 0 \\
   - y & x & -\sqrt{2}y & 0 \\
   x & y & \sqrt{2}x & 0 \\
    0 & 0 &0 & \rho
    \end{pmatrix},
\end{align}
which fulfills $\tilde{u}^\mu = \xi^\mu{}_\nu u^\nu$.


\subsection{G\"odel's spacetime}

G\"odel's spacetime is an exact solution to Einstein's field equations with a pressure-free perfect fluid and a negative cosmological constant. The metric has the form (see e.g. \cite{janssen2022}):
\begin{align}
    \dd s^2 &= - \dd t^2 +  \dd r^2 + \dd z^2  + \frac{4}{\Omega}\sinh^2(\Omega r/\sqrt{2})\dd t \dd \varphi \nonumber\\
    &\qquad+ \frac{1}{2\Omega^2}\left[\sinh^2(\sqrt{2}\Omega r)- 8 \sinh^4(\Omega r/\sqrt{2})\right] \dd \varphi^2\,.
\end{align}
The parameter $\Omega$ is both connected with the density of the fluid and the cosmological constant and has to do with an intrinsic angular velocity of the spacetime around the observer at the center of the coordinate system ($r=0$). In fact, in the limit $\Omega\to 0$, we recover Minkowski spacetime in cylindrical coordinates.

It can be checked that the observer defined by the time coordinate, $u^\mu = (1, 0, 0, 0)$, is comoving with the fluid \cite{hawking2023large}. This observer is non-accelerating (i.e., geodesic), expansion free, and shear free. However,  the rotation of the spacetime induces a rotation of the observer that makes it impossible to define an orthogonal foliation (even locally). Indeed, there is a nontrivial component of the vorticity:
\begin{equation}
    \omega_{r\varphi} = -\frac{\sinh(\sqrt{2}\Omega r)}{\sqrt{2}} \,,\\
\end{equation}
which is nonvanishing for nonzero $\Omega$. Moreover, the fact that the spacetime has closed timelike curves implies that it cannot be globally decomposed in a standard 3+1 framework due to the lack of global hyperbolicity.

\subsection{Spherically-symmetric spacetime}

In the following part, we will consider two observers in Schwarzchild spacetime: the accelerating observer
\begin{equation}\label{stat}
 u^\mu=\left(\left(1-\frac{m}{r}\right)^{-\frac{1}{2}},\,0,\,0,\,0\right)\,,
\end{equation}
and the geodesic observer: 
\begin{equation}
 \tilde u^\mu=\left(\left(1-\frac{3m}{r}\right)^{-1/2},\,0,\,\sqrt{\frac{m}{r^2(r-3m)}},\,0\right), \label{eq:Schwutilde}
\end{equation}
where $m$ is a mass of a (compact) object.
For the first one, all the kinematic properties are vanishing except for the acceleration, which has the form:
\begin{equation}
    \dot{u}^\mu = \left( 0,\, \frac{m}{r^2},\, 0,\, 0 \right)\,.
\end{equation}
For the second observer, we find:
\begin{align}
    \tilde{\dot{u}}^\mu &= 0\,,\\
    \tilde \Theta&=\sqrt{\frac{m}{r-3m}}\frac{\cot(\theta)}{r}\,,\\
    \tilde \omega_{\mu\nu} &= -\frac{r-6m}{4}\sqrt{\frac{r}{(r-3m)^3}}
      \begin{pmatrix}
        0 & \frac{m}{r^2 }  & 0 & 0 \\
        & 0 & \sqrt{\frac{m}{r}} & 0 \\
        &  & 0 & 0 \\
        {\rm antis.} &  & & 0 
        \end{pmatrix}\,,\\
   \tilde \sigma_{\mu\nu} & =
    -\frac{m(r-2m)}{3r(r-3m)} \tilde{\Theta} 
        \begin{pmatrix}
        1 &  -\frac{9}{4}\sqrt{\frac{r}{m}} \tan(\theta) & - r \sqrt{\frac{r}{m}}  & 0\\
        & \frac{r^2(r-3m)}{m(r-2m)^2} & \frac{9r^2}{4m} \tan(\theta)& 0 \\
        &  & \frac{r^3}{m} & 0 \\
        {\rm sym.} &  & & -\frac{2r^3(r-3m)}{m (r-2m)}\sin^2(\theta) 
        \end{pmatrix}\,.
\end{align}
We notice that the expansion becomes singular at the poles of the sphere $\theta=0,\pi$.

The vorticity and shear scalars are 
\begin{align}
\omega&:=\sqrt{\omega_{\mu\nu}\omega^{\mu\nu}}=\frac{1}{2}\sqrt{\frac{m}{2r^3}}\left(\frac{r-6m}{r-3m}\right),\\
\sigma&:=\sqrt{\sigma_{\mu\nu}\sigma^{\mu\nu}}\ = \frac{1}{2}\sqrt{\frac{m}{6r^3}} \frac{\sqrt{27(r-2m)^2+16r(r-3m)\cot^2(\theta)}}{r-3m}.
\end{align}

Now we construct a tetrad out of the observer \eqref{stat}:
\begin{equation}
     e_A{}^\mu=\begin{pmatrix}
          \frac{1}{\sqrt{1-\frac{2m}{r}}}  & 0 &  0  & 0 \\
   0 & \sqrt{1-\frac{2m}{r}} & 0 & 0 \\
    0 & 0 &   \frac{1}{r}  & 0 \\
   0 & 0 & 0 & \frac{1}{r\sin(\theta)}
        \end{pmatrix}
\end{equation}
It can be proven that the following Lorentz transformation generates another tetrad with observer \eqref{eq:Schwutilde}:
\begin{equation}
    \Lambda^A{}_B=\begin{pmatrix}
          \sqrt{ \frac{r-2m}{r-3m}}  & 0 &    \sqrt{ \frac{m}{r-3m}}  & 0 \\
   0 & 1 & 0 & 0 \\
    \sqrt{ \frac{m}{r-3m}} & 0 &   \sqrt{ \frac{r-2m}{r-3m}}  & 0 \\
   0 & 0 & 0 & 1
        \end{pmatrix}\,.
\end{equation}

\subsection{Examples in theories beyond General Relativity}

Here we present further examples that correspond to solutions of certain modified gravity theories. In particular, the ones we analyze were taken from the works \cite{ferraro2015remnant, nashed2015regularization} on $f(T)$ gravity\footnote{For readers not familiar with this proposal, see \cite{Golovnev:2018wbh} and references therein.
}.

\subsubsection{Example 1}

Firstly, let us consider the local Lorenz tranformation which was studied in \cite{ferraro2015remnant} applied to the trivial observer \eqref{eq:MinktrivialOb} in Minkowski spacetime:
\begin{equation}
\Lambda^A{}_B ~=~
\begin{pmatrix}
\cosh (\chi)  & 0 & 0 & \sinh (\chi)  \\
0 & \cos (\alpha)  & -\sin( \alpha)  & 0 \\
0 & \sin (\alpha)  & \cos( \alpha)  & 0 \\
\sinh (\chi)  & 0 & 0 & \cosh (\chi)
\end{pmatrix} ~,
\end{equation}
where $\alpha$ and $\chi$ are two independent local parameters $\chi(x^{\mu}) $ and $\alpha( x^{\mu})$ of the transformation. This matrix can be written as the composition $\Lambda^A{}_B=B^A{}_C R^C{}_B$ of a local rotation $R$ with angle $\alpha$ and a local boost $B$ with rapidity $-\chi$:
\begin{equation}
    R^A{}_B=\begin{pmatrix}
        1  & 0 & 0 & 0  \\
        0 & \cos (\alpha)  & -\sin( \alpha)  & 0 \\
        0 & \sin (\alpha)  & \cos (\alpha)  & 0 \\
        0  & 0 & 0 & 1
    \end{pmatrix}\,,\qquad
    B^A{}_B=\begin{pmatrix}
        \cosh (\chi)  & 0 & 0 & \sinh (\chi)  \\
        0 & 1  & 0  & 0 \\
        0 & 0  & 1  & 0 \\
        \sinh (\chi) & 0 & 0 & \cosh (\chi)
    \end{pmatrix}\,.
\end{equation}
These operators commute. So we can analyze them separately. As we know, the rotation preserves the foliation since the observer remains unchanged, so $\alpha$ can be arbitrary. On the other hand, as shown in Sec.~\ref{sec:boostedobs}, for the transformed observer to be foliation-generating, the boost rapidity is required to fulfill $\partial_x \chi = 0 = \partial_y \chi$, which is the same as the condition of solving the equations of motions of $f(T)$ gravity in the Weitzenböck gauge.

\subsubsection{Example 2}

Now we focus on the (inverse) tetrads discussed in \cite{nashed2015regularization}:
\begin{equation} \label{basic}
E^A{}_\mu =
\begin{pmatrix}
A(r) & A_1(r) & 0 & 0 \\
A_2(r)\sin(\theta) \cos(\phi) & A_3(r)\sin(\theta) \cos(\phi) & r\cos(\theta) \cos(\phi) & -r\sin(\theta) \sin(\phi) \\
A_2(r)\sin(\theta) \sin(\phi) & A_3(r)\sin(\theta) \sin(\phi) & r\cos(\theta) \sin(\phi) & r\sin(\theta) \cos(\phi) \\
A_2(r)\cos(\theta) & A_3(r)\cos(\theta) & -r\sin(\theta) & 0 \\
\end{pmatrix}\,,
\end{equation}

\begin{equation}\label{nobasic}
\tilde{E}^A{}_\mu = {\footnotesize
\begin{pmatrix}
LA + HA_2 & LA_1 + HA_3 & 0 & 0 \\
-(LA_2 + HA) \sin(\theta) \cos(\phi) & -(LA_3 + HA_1) \sin(\theta) \cos(\phi) & -r \cos(\theta) \cos(\phi) & r \sin(\theta) \sin(\phi) \\
-(LA_2 + HA) \sin(\theta) \sin(\phi) & -(LA_3 + HA_1) \sin(\theta) \sin(\phi) & -r \cos(\theta) \sin(\phi) & -r \sin(\theta) \cos(\phi) \\
-(LA_2 + HA) \cos(\theta)            & -(LA_3 + HA_1) \cos(\theta) & r \sin(\theta) & 0
\end{pmatrix}}
\end{equation}
where $A(r)$, $A_1(r)$, $A_2(r)$, and  $A_3(r)$ are four unknown functions of the radial coordinate , $L=L(r):=\sqrt{H(r)^2+1}$ and $H=H(r)$ is an arbitrary function.
 
These tetrad fields have been studied in \cite{nashed2013special} and, in particular, the case with
\begin{align}\label{Sch_ba}
A(r)=1-\dfrac{M}{r}, \qquad A_1(r)=\dfrac{M}{r}\left(1-\frac{M}{r}\right)^{-1},\quad
A_2(r)=\dfrac{M}{r}, \qquad
A_3(r)=\dfrac{1-\dfrac{M}{r}}{1-\dfrac{2M}{r}}\,,
\end{align}
which corresponds to a solution of the theory,  provided that $ f(0)=0, f_T(0)\neq0,  f_{TT}\neq0$.
 
The local Lorentz transformation connecting the inverse tetrads,
\begin{equation}
    \tilde{E}^A{}_\mu =\Lambda^A{}_B E^B{}_\mu,
\end{equation}
has the following form\footnote{{Here we are correcting some sign typos present in \cite{nashed2013special}.}}
\begin{equation}
\Lambda^A{}_B =  
-{\footnotesize
\begin{pmatrix}
-L & -H \sin(\theta) \cos(\phi) & -H \sin(\theta) \sin(\phi) & -H \cos(\theta) \\
 H \sin(\theta) \cos(\phi) & 1 + H_1 \sin^2(\theta) \cos^2(\phi) & H_1 \sin^2(\theta) \sin(\phi) \cos(\phi) & H_1 \sin(\theta) \cos(\theta) \cos(\phi) \\
H \sin(\theta) \sin(\phi) & H_1 \sin^2(\theta) \sin(\phi) \cos(\phi) & 1 + H_1 \sin^2(\theta) \sin^2(\phi) & H_1 \sin(\theta) \cos(\theta) \sin(\phi) \\
H \cos(\theta) & H_1 \sin(\theta) \cos(\theta) \cos(\phi) & H_1 \sin(\theta) \cos(\theta) \sin(\phi) & 1 + H_1 \cos^2\theta \\
\end{pmatrix}}
\end{equation}
where $H_1 = L-1$.
It can be checked that all the observers associated to the transformed tetrad are given by the following expression:
\begin{equation}
\tilde{u}^\mu=\tilde{e}_0{}^\mu = \left(-\frac{A_1 H +A_3 L}{A_1A_2-A_3A},\frac{A H +A_2 L}{A_1A_2-A_3A},0,0\right)\,.
\end{equation}
All these observers are vorticity-free and have associated local foliations.

\subsubsection{Example 3}
The tetrad presented in \cite{PhysRevD.99.024022} coincides with the one proposed earlier in the literature \cite{Tamanini:2012hg,lucas2009regularizing,krvsvsak2016covariant}
in the case of the flat limit but can provide different solutions in the strong field regime. Using the non-flat metric in spherical coordinates
\begin{align}
    \mathrm{d}s^2=-A(r)\mathrm{d}t^2+B(r)\mathrm{d}r^2+r^2\mathrm{d}\Omega^2\,
\end{align} 
the authors consider an inverse tetrad obtained after applying a boost to the diagonal one, whose associated observer is
\begin{equation}
    u^\mu = \left(\frac{1}{\sqrt{A}},0,0,0\right)\,.
\end{equation}
The result is:
\begin{align}
    \tilde{E}^A{}_{\mu}=\begin{pmatrix}
        1 & -\sqrt{1-A} & 0 & 0 \\
        -\sqrt{\frac{B}{A}}\sqrt{1-A} & \sqrt{\frac{B}{A}} & 0 &0 \\
        0 & 0 & r & 0\\
        0 & 0 & 0 & r \sin(\theta)
    \end{pmatrix}\,,
\end{align}
whose dual tetrad is given by:
\begin{align}
    \tilde{e}_A{}^{\mu}=
    \begin{pmatrix}
        \frac{1}{A} & \sqrt{\frac{1-A}{AB}} & 0 & 0 \\
        \frac{\sqrt{1-A}}{A} & \frac{1}{\sqrt{AB}} & 0 &0  \\
        0 & 0 & \frac{1}{r} & 0\\
        0 & 0 & 0 & \frac{1}{r\sin(\theta)}
    \end{pmatrix}\,.
\end{align}
We study the first element of this frame, which corresponds to the observer:
\begin{equation}
    \tilde{u}^\mu = \tilde{e}_0{}^\mu = \left(\frac{1}{A}, \sqrt{\frac{1-A}{AB}}, 0, 0\right)\,.
\end{equation}
This observer has a non-vanishing expansion and shear, but it is free falling (zero acceleration) and vorticity-free. So it defines a foliation orthogonal to it. Note that in \cite{PhysRevD.99.024022} there was a pure Lorentz rotation in addition to the boost in order to satisfy the EoM of $f(T)$. But according to Proposition \ref{prop_rotation} this detail will not affect foliation.

\section{Conclusions}\label{sec:Conclusions}

At the beginning of this article, the geometrical notion of foliation has been presented, as well as the conditions that a congruence of observers must fulfill to be orthogonal to one. This is done through known mathematical results in a pedagogical way, providing a straightforward toolkit for physicists to follow. In particular, the kinetic characteristics of the observer's motion include the so-called vorticity (see Definition \ref{defkinem}). Such a tensor is crucial for the matter of the present work since vanishing vorticity is a necessary and sufficient condition for the existence of foliation (see Theorem \ref{th:condit}). This is of particular importance in gravitational physics when studying the Hamiltonian evolution of certain initial conditions for the spacetime (together with the matter fields contained in it). In this context, the $3+1$ split of spacetime is based on the assumption that it admits a (global) foliation.

In the case we rely on a certain observer to construct or determine such a foliation, some care must be taken. If the observer is poorly chosen, such that it does not admit foliation, it can lead to inconsistencies in physical applications such as numerical relativity and Hamiltonian analysis.

In the context of modified gravity theories, it is usual to work in terms of (orthonormal) tetrads, whose timelike vector can be taken as the reference observer to construct the would-be foliation. In some of these theories, there are Lorentz transformations connecting different solutions of the theory and, hence, distinct observers allowed by the dynamical equations. Regarding this point, we have revised how Lorentz transformations affect the foliation condition in a theory-independent way (at the kinematical level). In particular, we found that pure Lorentz rotations acting on an observer that admits foliation will also admit foliation with an unchanged geometry (Proposition \ref{prop_rotation}). Lorentz boosts, on the other hand, can generate non-trivial vorticity for the new observer, which therefore will not admit foliation. However, we also note that boost does not necessarily introduce vorticity, but certain boosts may introduce non-vanishing shear, for instance, while preserving foliation. The general equation that the new observer must satisfy to generate a foliation is presented in Theorem \ref{th:cond_foliation}. To illustrate these facts, we provide examples in Minkowski and Schwarzschild spacetimes in which the transformation leads to a violation of the foliation condition. Note that those are solutions of General Relativity, whose tetrad formulation is Lorentz invariant. This means that one can always locally transform the solution to one that coincides with an observer which admits foliation. This fact may not hold in theories in which Lorentz invariance is broken. As an example, we have investigated known solutions to $f(T)$ gravity in the Weitzenböck gauge. In principle, inconsistencies can arise if one adopts solutions that do not admit foliation. In particular, we presented several examples of observers with vanishing vorticity.

At this point, we would like to comment on some aspects present in the literature that might lead to misinterpretations of the mathematics behind a proper $3+1$ split. In the context of metric-affine theories, one can write the conditions we derived in an equivalent way by invoking a general connection \cite{Agashe2023, Capozziello:2021pcg}. Although this can be interesting from a mathematical point of view, it is a physically misleading way of presenting them and definitely not practical. For instance, such expressions might suggest that the foliation condition (by spacelike hypersurfaces) depends on the chosen connection, whereas it is a purely topological and metric issue.

To conclude, we would like to point out some general thoughts on the potential limitations and applications of the present work. With the increasing interest in modified gravity, it will eventually be required to test it through numerical techniques, as the ones employed in numerical relativity. Given a spacetime solution to a theory, together with an identification of an observer, it is straightforward to check whether or not it satisfies the condition for foliation. However, to develop a full setup for numerical relativity,  much more work is required, including setting up a $3+1$ split that admits the condition for strong hyperbolicity. We finally highlight that the setting employed in this work gives a more practical approach to developing numerical techniques in metric-affine theories of gravity, where the simple Levi-Civita connection is used instead of defining new quantities related to other connections.

\vspace{6pt}

\authorcontributions{

Conceptualization, D.B., A.J.C, A.W.; 
methodology, D.B., A.J.C, A.W.; 
software, A.J.C.; 
validation, D.B., A.J.C; 
formal analysis, D.B., A.J.C, A.W.; 
investigation, D.B., A.J.C, A.W.; 
resources, D.B., A.J.C, A.W.; 
writing---original draft preparation, D.B., A.J.C, A.W.;
writing---review and editing, D.B., A.J.C, A.W.;
supervision, A.W.; 
project administration, A.W.; 
funding acquisition, D.B., A.J.C, A.W. 

All authors have read and agreed to the published version of the manuscript.
}

\funding{

A.J.C. was supported by the Estonian Research Council through the grant PRG356.  A.W. acknowledges financial support from MICINN (Spain) {\it Ayuda Juan de la Cierva - incorporaci\'on} 2020 No. IJC2020-044751-I and from  the Spanish Agencia Estatal de Investigaci\'on Grant No. PID2022-138607NB-I00, funded by MCIN/AEI/10.13039/501100011033, EU and
ERDF A way of making Europe.}

\dataavailability{No data has been generated during the present study.} 

\acknowledgments{
The authors would like to thank Mar\'ia-Jos\'e Guzm\'an, Margus Saal and Gerardo Garc\'ia-Moreno for useful comments and discussions.}

\conflictsofinterest{
The authors declare no conflicts of interest.} 

\appendixtitles{yes}
\appendixstart
\appendix
\section{Geometry of distributions and foliations}\label{app:distrib}

Let us introduce \cite{gray1967pseudo,yano1985structures} the notion of an almost-product structure on $\mathcal{M}$ and further Naveira's classification of the almost-product manifolds \cite{naveira1983classification}. The structure is determined by 
a field of endomorphisms of $T\mathcal{M}$, i.e., a $(1,1)-$tensor
field $P$ on $\mathcal{M}$, such that $P^2=\idop$ ($\idop$ is the identity operator). In this case, at any point $p\in \mathcal{M}$, one can consider two 
subspaces of $T_p\mathcal{M}$ corresponding respectively to two eigenvalues $\pm 1$ of $P$.
It defines two complementary  distributions on $\mathcal{M}$, i.e., $T\mathcal{M}=\mathcal{D}^+\oplus \mathcal{D}^- $.
Moreover, if $\mathcal{M}$ is equipped with a pseudo-Riemanian metric $g$ such that:
\begin{equation}\label{comp}
 g(PX,PY)=g(X,Y),\;\;X,Y\in\Gamma(T\mathcal{M}),
\end{equation}
then both distributions are mutually orthogonal. In this case, $P$ is called a pseudo-Riemaniann almost-product structure.

Let now $\mathcal{D}$ be a $k$-distribution on $(\mathcal{M},g)$ and $\mathcal{D}^{\perp}$ the distribution orthogonal to $\mathcal{D}$. At every 
point $p\in \mathcal{M}$, we have then $T_pM=\mathcal{D}_p\oplus \mathcal{D}_p^{\perp}$.
Thus we can uniquely define a $(1,1)$ tensor field $P$ such that $P^2=\idop_{T\mathcal{M}},\; P|_{\mathcal{D}}=\idop_\mathcal{D},\; P|_{\mathcal{D}^{\perp}}=-\idop_{\mathcal{D}^{\perp}}$. It is clear
that $P$ becomes automatically a (pseudo-)Riemaniann almost-product structure.

In the following, we follow \cite{Ferrando_2007}. We call $\Pi$ and $\Pi_\perp$ the projectors of vector fields onto $\mathcal{D}$ and $\mathcal{D}^\perp$, respectively. Notice that the following relations hold:
\begin{equation}
    \idop =\Pi+\Pi_\perp\,,\qquad P = \Pi - \Pi_\perp\,.
\end{equation}
We define the vector valued 2-covariant tensor $\mathcal{Q}$ as: 
\begin{equation}
    \mathcal{Q}(X,Y) := \Pi_\perp \Big(\nabla_{\Pi X} (\Pi Y)\Big)\,,
\end{equation}
defined on arbitrary vector fields $X,Y\in\Gamma(T\mathcal{M})$. In index notation:
\begin{equation}
    \mathcal{Q}^\rho{}_{\mu\nu} X^\mu Y^\nu := \Pi_\perp^\rho{}_\alpha (\Pi X)^\nu \nabla_\beta (\Pi Y)^\mu\,,
\end{equation}
where $(\Pi X)^\mu := \Pi^\mu{}_\nu X^\nu$\,. This object can be split as follows:
\begin{align}
    \mathcal{Q}^\rho{}_{\mu\nu} &= 
    \mathcal{A}^\rho{}_{\mu\nu} + \mathcal{S}^\rho{}_{\mu\nu}\\
    &= 
    \mathcal{A}^\rho{}_{\mu\nu} + \hat{\mathcal{S}}^\rho{}_{\mu\nu} + \frac{1}{k}\mathcal{T}^\rho h_{\mu\nu}
\end{align}
where
\begin{align}
    \mathcal{A}^\rho{}_{\mu\nu} &= \mathcal{Q}^\rho{}_{[\mu\nu]}\,,\\
    \mathcal{S}^\rho{}_{\mu\nu} &= \mathcal{Q}^\rho{}_{(\mu\nu)}\,\\
    \mathcal{T}^\rho &= \mathcal{S}^\rho{}_{\mu\nu} g^{\mu\nu}\,.
\end{align}
$\hat{\mathcal{S}}^\rho{}_{\mu\nu}$ is then implicitly defined as the traceless part of $\mathcal{S}^\rho{}_{\mu\nu}$ in the last two indices and $h_{\mu\nu}:= g_{\alpha\beta} \Pi^\alpha{}_\mu \Pi^\beta{}_\nu$ is the induced metric in $\mathcal{D}$.

\begin{defi}
We say that $\mathcal{D}$ is:
\begin{itemize}
    \item a foliation if $\mathcal{A}^\rho{}_{\mu\nu}=0$;
    \item  geodesic if $\mathcal{S}^\rho{}_{\mu\nu}=0$;
    \item minimal if $\mathcal{T}^\rho=0$;
    \item umbilical if $\hat{\mathcal{S}}^\rho{}_{\mu\nu}=0$.
\end{itemize}    
\end{defi}

Now we present a characterization of these properties in terms of $P$  (see \cite{gil1983geometric}):\footnote{In index notation and arbitrary coordinates $\{x^\mu\}$, we can write:
\begin{align}
    A &= A^\mu \partial_\mu\,,&
    B &= B^\mu \partial_\mu\,,&
    X &= X^\mu \partial_\mu\,,\nonumber\\
    \alpha &= \alpha_\mu {\rm d}x^\mu\,,&
    e_a&=(e_a)^\mu\partial_\mu\,, &
    P &= P^\mu{}_\nu \partial_\mu\otimes {\rm d}x^\nu\,,
\end{align}
and introduce $P_{\mu\nu} \equiv g_{\mu\rho}P^\rho{}_\nu$,
so that the properties $D_1$, $D_2$ and $D_3$ can be expressed, respectively, as
\[
    0=A^\nu A^\rho\nabla_\rho P^\mu{}_\nu \,,\qquad
    0=\alpha_\mu X^\mu \,,\qquad
    0= X^\rho A^\mu B^\nu \left(\nabla_{(\mu|} P_{\rho|\nu)}  - \frac{1}{k} g_{\mu\nu} \alpha_\rho \right)\,,
\]
and $\alpha_\mu= \delta^{ab} (e_a)^\rho (e_b)^\lambda \nabla_\rho P_{\mu\lambda}$, where $\delta^{ab}\equiv {\rm diag}(1,1,\ldots,1)$.
}
\begin{thm}
    The distribution $\mathcal{D}$  is called: geodesic, minimal or umbilical if and only if $\mathcal{D}$ has 
property $D_{1}$, $D_{2}$, $D_{3}$, respectively, where:
\begin{itemize}
\item $D_{1}\Longleftrightarrow\;(\nabla_{A}P)A=0$,

\item $D_{2}\Longleftrightarrow\;\alpha(X)=0$,

\item $D_{3}\Longleftrightarrow\; g((\nabla_{A}P)B,X)+g((\nabla_{B}P)A,X)=\frac{2}{k}g(A,B)\alpha(X)$,
\end{itemize}
where $X\in \Gamma(\mathcal{D})^{\perp},\; A,B\in \Gamma(\mathcal{D}).$
Here $\{e_{a}\}_{a=1}^{k},\;$$(k=\mathrm{dim}\mathcal{D})$ is a local orthonormal frame of $\mathcal{D}$
and $\alpha(X)={\displaystyle \sum_{a=1}^{k}g\big((\nabla_{e_{a}}P)e_{a},X\big)}$.
\end{thm}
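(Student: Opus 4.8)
The plan is to prove the equivalence between the three intrinsic geometric properties (geodesic, minimal, umbilical) defined via the tensor $\mathcal{Q}^\rho{}_{\mu\nu}$ and the three conditions $D_1$, $D_2$, $D_3$ expressed through the almost-product structure $P$. The key observation is that both descriptions encode the same information: the second fundamental form of the distribution $\mathcal{D}$, which measures the failure of $\mathcal{D}$ to be parallel-transported into itself. The bridge between the two formulations is the relation $P=\Pi-\Pi_\perp$ together with $\idop=\Pi+\Pi_\perp$, which gives $\Pi=\tfrac{1}{2}(\idop+P)$ and $\Pi_\perp=\tfrac{1}{2}(\idop-P)$.

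First I would compute $\nabla P$ in terms of the projectors. Since $\nabla \idop=0$, we have $\nabla_\alpha P^\mu{}_\nu = 2\nabla_\alpha \Pi^\mu{}_\nu$. The central computation is to show that for $A,B\in\Gamma(\mathcal{D})$ and $X\in\Gamma(\mathcal{D}^\perp)$, the contraction $g\big((\nabla_A P)B,X\big)$ reproduces (up to a factor of $2$) the projection $g\big(\mathcal{Q}(A,B),X\big)$. The idea is that $\Pi B=B$ for $B\in\Gamma(\mathcal{D})$, so applying $\nabla_A$ and projecting the result onto $\mathcal{D}^\perp$ via $X$ isolates precisely $\mathcal{Q}^\rho{}_{\alpha\beta}A^\alpha B^\beta X_\rho$. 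Using $g(PY,X)=-g(Y,X)$ for $X\in\Gamma(\mathcal{D}^\perp)$ and the fact that $\nabla P$ swaps the two eigenspaces, I would establish the clean identity $g\big((\nabla_A P)B,X\big)=2\,g\big(\mathcal{Q}(A,B),X\big)$. Once this key lemma holds, the three equivalences follow by matching symmetric and antisymmetric parts and traces.

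With the bridging identity in hand, each equivalence is then a direct reading. For $D_1$: setting $A=B$ picks out the symmetric diagonal part, and $(\nabla_A P)A=0$ (projected arbitrarily, using that $(\nabla_A P)A$ already lies in $\mathcal{D}^\perp$) is equivalent to $\mathcal{S}^\rho{}_{\mu\nu}A^\mu A^\nu=0$ for all $A\in\Gamma(\mathcal{D})$, hence to $\mathcal{S}^\rho{}_{\mu\nu}=0$ by polarization; this is the geodesic condition. For $D_2$: the trace $\alpha(X)=\sum_a g\big((\nabla_{e_a}P)e_a,X\big)$ is, via the identity, proportional to $\mathcal{T}^\rho X_\rho$, so $\alpha(X)=0$ for all $X\in\Gamma(\mathcal{D}^\perp)$ is exactly $\mathcal{T}^\rho=0$, the minimal condition. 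For $D_3$: the symmetrized left-hand side corresponds to $\mathcal{S}^\rho{}_{\mu\nu}$, the right-hand side subtracts its trace part $\tfrac{1}{k}h_{\mu\nu}\mathcal{T}^\rho$, and the condition states their difference vanishes, which is precisely $\hat{\mathcal{S}}^\rho{}_{\mu\nu}=0$, the umbilical condition.

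The main obstacle I anticipate is establishing the bridging identity with the correct treatment of projectors and the sign conventions, since $\nabla_\alpha \Pi^\mu{}_\nu$ does not itself preserve the splitting and one must carefully track which terms survive after contracting with $A,B\in\Gamma(\mathcal{D})$ and $X\in\Gamma(\mathcal{D}^\perp)$. In particular, one must verify that cross terms of the form $g\big(\Pi(\cdots),X\big)$ vanish by orthogonality and that the restriction $P|_{\mathcal{D}}=\idop_\mathcal{D}$ is used consistently. The symmetrization in $D_3$ requires noting that $\mathcal{Q}(A,B)$ need not be symmetric in $A,B$ in general, but the foliation hypothesis (or the explicit symmetrization on the left-hand side of $D_3$) projects onto the symmetric part $\mathcal{S}$, so the matching is exact. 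Everything else is routine index bookkeeping, which I would not spell out in full.
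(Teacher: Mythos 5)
Your proposal is correct, but note that the paper itself contains no proof of this theorem: it is quoted from Gil-Medrano, with the proof explicitly deferred to \cite{gil1983geometric}. What you have written is therefore a self-contained reconstruction rather than a variant of an argument in the paper, and your route is the natural one. In fact the key lemma is cleaner than you anticipate: for $A,B\in\Gamma(\mathcal{D})$, using $PB=B$ and $P=\Pi-\Pi_\perp$,
\begin{equation*}
(\nabla_A P)B \;=\; \nabla_A(PB)-P(\nabla_A B) \;=\; (\idop-P)\,\nabla_A B \;=\; 2\,\Pi_\perp(\nabla_A B) \;=\; 2\,\mathcal{Q}(A,B)\,,
\end{equation*}
an identity between \emph{vectors}, not merely between contractions with $X$; in particular $(\nabla_A P)B$ lands in $\Gamma(\mathcal{D}^\perp)$ automatically, so the ``careful tracking of cross terms'' you flag as the main obstacle dissolves --- no projector bookkeeping is needed beyond $\Pi_\perp\Pi=0$, which also gives tensoriality of $\mathcal{Q}$ in its second slot and thereby justifies the pointwise polarization you invoke for $D_1$. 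With this, $D_1\Leftrightarrow\mathcal{Q}(A,A)=0$ for all $A\Leftrightarrow\mathcal{S}=0$ (geodesic); $\alpha(X)=2\sum_a g\big(\mathcal{Q}(e_a,e_a),X\big)=2\,g(\mathcal{T},X)$ gives $D_2\Leftrightarrow\mathcal{T}=0$ (minimal, your factor of proportionality being exactly $2$); and the symmetrized condition $D_3$ becomes $4\,g\big(\mathcal{S}(A,B),X\big)=\tfrac{4}{k}\,g(A,B)\,g(\mathcal{T},X)$, i.e. $\hat{\mathcal{S}}=0$ (umbilical), exactly as you say. Two small points you should make explicit to close the argument: (i) the identification $\mathcal{T}=\sum_a\mathcal{S}(e_a,e_a)$ uses that the induced metric on $\mathcal{D}$ is positive definite, consistent with the paper's footnote where $\alpha_\mu$ is built with $\delta^{ab}$; for a distribution of general signature one would insert signs $\epsilon_a=g(e_a,e_a)$; (ii) concluding $\mathcal{T}=0$ from $g(\mathcal{T},X)=0$ for all $X\in\Gamma(\mathcal{D}^\perp)$ (and likewise passing from contractions with $X$ back to vector statements) uses nondegeneracy of $g$ restricted to $\mathcal{D}^\perp$, which is guaranteed by the paper's standing assumption that the distribution is everywhere non-null. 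With those remarks added, your proof is complete and, as far as one can compare against a citation, in the same spirit as the source the paper relies on.
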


It implies that a distribution has the property $D_1$ if and only if it has the properties $D_2$ and $D_3$. Note that the covariant derivatives are Levi-Civita of $g$, that is, it is the metric the almost product structure is compatible with. We then have the following theorem:
\begin{thm}{\bf (Gil-Medrano)}\\
A foliation $\mathcal{D}$ is a totally geodesic, minimal or totally umbilical if and only if $\mathcal{D}$ has the 
property $F_1$, $F_2$, $F_3$ respectively, where
\begin{equation}
 F_i\,\iff\,F+D_i,\;\;i=1,2,3
\end{equation}
and
\begin{equation}\label{calkow}
 F\,\iff\,(\nabla_A P)B=(\nabla_B P)A\;\;\forall\,A,B\in\Gamma(\mathcal{D}).
\end{equation}
which can be written in index notation as
\begin{equation}
    \nabla_\mu P^\rho{}_\nu (A^\mu B^\nu - A^\nu B^\mu) = 0\,.
\end{equation}
\end{thm}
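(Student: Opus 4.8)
The plan is to reduce the statement to the characterization of the distribution properties already established in the preceding theorem, so that the only genuinely new ingredient is the equivalence between the foliation condition $\mathcal{A}^\rho{}_{\mu\nu}=0$ and property $F$. Each ``totally'' property of a foliation is, by the definitions given above, the conjunction of the foliation condition with the corresponding distribution property: totally geodesic $=$ foliation $+$ geodesic, minimal foliation $=$ foliation $+$ minimal, and totally umbilical $=$ foliation $+$ umbilical. Since the preceding theorem already supplies geodesic $\iff D_1$, minimal $\iff D_2$, and umbilical $\iff D_3$, once foliation $\iff F$ is proven the three equivalences $F_i\iff F+D_i$ follow immediately, the symbol $+$ being understood as the simultaneous validity of both properties.

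To establish foliation $\iff F$, I would first rewrite the projectors through the almost-product structure as $\Pi=\tfrac{1}{2}(\idop+P)$ and $\Pi_\perp=\tfrac{1}{2}(\idop-P)$, so that $P=\Pi-\Pi_\perp$. The central computation is to evaluate $(\nabla_A P)B$ for $A,B\in\Gamma(\mathcal{D})$. Using $PB=B$ one expands $(\nabla_A P)B=\nabla_A(PB)-P(\nabla_A B)=\nabla_A B-P(\nabla_A B)$ and, splitting $\nabla_A B$ into its $\mathcal{D}$ and $\mathcal{D}^\perp$ components, obtains $(\nabla_A P)B=2\,\Pi_\perp(\nabla_A B)=2\,\mathcal{Q}(A,B)$. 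Antisymmetrizing, property $F$, namely $(\nabla_A P)B=(\nabla_B P)A$, becomes $\Pi_\perp(\nabla_A B-\nabla_B A)=0$, which by torsion-freeness of the Levi-Civita connection equals $\Pi_\perp([A,B])=0$ for all $A,B\in\Gamma(\mathcal{D})$. This is precisely the involutivity of $\mathcal{D}$, equivalently $\mathcal{A}^\rho{}_{\mu\nu}=0$, so by the local Frobenius theorem $\mathcal{D}$ is a foliation.

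The hard part will be the central identity $(\nabla_A P)B=2\,\Pi_\perp(\nabla_A B)$: one must carefully exploit that $P$ restricts to $+\idop$ on $\mathcal{D}$ and to $-\idop$ on $\mathcal{D}^\perp$, and that it is the compatibility $g(PX,PY)=g(X,Y)$ together with the vanishing torsion that make $\nabla P$ encode exactly the second fundamental form data carried by $\mathcal{Q}$. Once this is in place, the remainder is bookkeeping: translating $\Pi_\perp([A,B])=0$ into the vanishing of $\mathcal{A}$, verifying that it suffices to test $F$ on sections of $\mathcal{D}$ (the restriction $A,B\in\Gamma(\mathcal{D})$ appearing in the statement), and appending $D_i$ via the preceding theorem. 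Finally, writing $F$ in the index form $\nabla_\mu P^\rho{}_\nu(A^\mu B^\nu-A^\nu B^\mu)=0$ is a direct transcription of the antisymmetrized bracket condition.
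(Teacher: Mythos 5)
Your proposal is mathematically correct, but note that the paper does not actually prove this theorem: it defers the proof entirely to Gil-Medrano \cite{gil1983geometric}, adding only the remark that property $F$ is equivalent to the Frobenius condition. Your argument supplies precisely the computation behind that remark, and it is the right one: for $A,B\in\Gamma(\mathcal{D})$, $PB=B$ gives $(\nabla_A P)B=\nabla_A B-P(\nabla_A B)=2\,\Pi_\perp(\nabla_A B)=2\,\mathcal{Q}(A,B)$, so antisymmetrizing $F$ and using torsion-freeness yields $\Pi_\perp([A,B])=0$, i.e.\ $\mathcal{A}^\rho{}_{\mu\nu}=0$, which is exactly the appendix's definition of a foliation (your appeal to the local Frobenius theorem is thus not even needed at this level; it only upgrades $\mathcal{A}=0$ to the existence of integral manifolds). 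The reduction $F_i\iff F+D_i$ then follows from the preceding theorem as you say, granting the definitions --- which the paper also leaves to a reference --- that a foliation is totally geodesic, minimal, or totally umbilical iff $\mathcal{S}^\rho{}_{\mu\nu}=0$, $\mathcal{T}^\rho=0$, or $\hat{\mathcal{S}}^\rho{}_{\mu\nu}=0$; this is consistent because $\mathcal{A}=0$ makes $\mathcal{Q}=\mathcal{S}$ the second fundamental form of the leaves. One small correction: your final paragraph misattributes the difficulty. The identity $(\nabla_A P)B=2\,\Pi_\perp(\nabla_A B)$ is purely algebraic, using only $P^2=\idop$ and the eigenbundle splitting; the compatibility $g(PX,PY)=g(X,Y)$ plays no role in it (it merely ensures $\mathcal{D}^\perp$ is the $(-1)$-eigenbundle, i.e.\ it is part of the setup), and vanishing torsion enters only in trading $\nabla_A B-\nabla_B A$ for $[A,B]$ --- a step you had in fact already completed correctly in your second paragraph, so there is no remaining ``hard part.''
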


The proof of this theorem can be found in \cite{gil1983geometric}.
It is easy to see that the property $F$ is equivalent to Frobenius' theorem, i.e., the distribution $\mathcal{D}$ with this property
is a maximal foliation. The meanings of totally geodesic, minimal or totally umbilical foliations are briefly discussed in \cite{Borowiec:2013kgx}.

Coming back to our physical examples discussed in this manuscript, it is straightforward to see that the tensor constructed with the given observer $u^\alpha$ and spacetime metric $g_{\alpha\beta}$
\begin{equation}
    P^\alpha{}_{\beta}= \delta^\alpha{}_{\beta} + 2u^\alpha u_\beta
\end{equation}
satisfies the properties of the almost-product structure. Therefore, since $P$ is compatible with the metric $g$, the results \cite{awmgr,Borowiec:2013kgx} on geometry and its relation to the observer's kinematics are given with respect to $\nabla$ being compatible with $g$.

\begin{adjustwidth}{-\extralength}{0cm}

\reftitle{References}

\bibliography{references.bib}

\PublishersNote{}
\end{adjustwidth}
\end{document}